
\documentclass[letterpaper, 10pt, conference]{ieeeconf}  

\IEEEoverridecommandlockouts                              

\overrideIEEEmargins                                      

\title{\LARGE\bf Covariance Dynamics and Entanglement in Translation Invariant Linear Quantum Stochastic Networks} 
\author{ Arash Kh. Sichani, \qquad Igor G. Vladimirov, \qquad Ian R. Petersen
\thanks{This work is supported by the Australian Research Council. The authors are with UNSW Canberra, ACT 2600, Australia. E-mail: {\tt arash\_kho@hotmail.com, igor.g.vladimirov@gmail.com, i.r.petersen@gmail.com}.}
}

\usepackage{mathptmx} 

\usepackage{graphicx} 
\usepackage{times} 
\usepackage{amsmath} 
\usepackage{amssymb}  
\usepackage{datetime}
\usepackage{comment}
\usepackage{tikz}
\usetikzlibrary{calc}
\usetikzlibrary{matrix}
\newtheorem{lem}{Lemma}
\newtheorem{thm}{Theorem}

\def\<{\leqslant}           
\def\>{\geqslant}           


\def\wh{\widehat}
\def\wt{\widetilde}
\def\mod{\mathrm{mod}}   
\def\Re{\mathrm{Re}}   
\def\Im{\mathrm{Im}}   

\def\mA{\mathbb{A}}    
\def\mR{\mathbb{R}}    
\def\mC{\mathbb{C}}    

\def\Tr{\mathrm{Tr}}       
\def\rT{\mathrm{T}}        

\def\bE{\mathbf{E}}    


\def\[[[{[\![\![}
\def\]]]{]\!]\!]}


\def\re{\mathrm{e}}        
\def\rd{\mathrm{d}}        



\def\br{\mathbf{r}}
\def\x{\times}
\def\ox{\otimes}

\def\vec{\mathrm{vec}}

\def\cW{\mathcal{W}}

\def\cX{\mathcal{X}}

\def\cA{\mathcal{A}}

\def\cE{\mathcal{E}}

\def\cS{\mathcal{S}}

\def\mU{\mathbb{U}}

\def\mH{{\mathbb H}}
\def\mS{{\mathbb S}}

\def\Ups{\Upsilon}

\DeclareMathAlphabet{\mathbfit}{OML}{cmm}{b}{it}


\onecolumn
\pagestyle{plain}
\begin{document}
\maketitle
\thispagestyle{empty}

\begin{abstract}
This paper is concerned with a translation invariant network of identical quantum stochastic systems subjected to external quantum noise. Each node of the network is directly coupled to a finite number of its neighbours. This network is modelled as an open quantum harmonic oscillator and is governed by a set of linear quantum stochastic differential equations (QSDEs). The dynamic variables of the network satisfy the canonical commutation relations (CCRs). Similar large-scale networks can be found, for example, in quantum metamaterials and optical lattices. Using spatial Fourier transform techniques, we obtain a sufficient condition for stability of the network  in the case of finite interaction range, and consider a mean square  performance index for the stable network in the thermodynamic limit. The Peres-Horodecki-Simon separability criterion is employed in order to obtain sufficient and necessary conditions for quantum entanglement of bipartite systems of nodes of the network in the Gaussian invariant state. The results on stability and entanglement are extended to the infinite chain of the linear quantum systems by letting the number of nodes go to infinity. A numerical example is provided to illustrate the results.
\end{abstract}
\section{INTRODUCTION}
Recently, there has been an increased interest in the study of dynamics of large-scale quantum networks. In particular, one can refer to the studies on artificial optical media known as quantum metamaterials; see, for example, \cite{Rakhmanov2008, Notomi2008, Zheludev11, Quach11, Zheludev11, Zagoskin12}. These networks are composed of complex unit cells and are effectively homogeneous on the scale of relevant wavelengths, for example, in the microwave range. In comparison with natural solids, whose optical behaviour is specified by the quantum energy level configurations of the constituent atoms or molecules, the electromagnetic response of quantum metamaterials stems from the controllable resonant characteristics of their building elements such as the Josephson devices or optical cavities \cite{Rakhmanov2008, Notomi2008,Quach11}. 
Quantum metamaterials are considered to be a promising approach to the implementation of quantum computer elements, such as quantum bits (qubits) and qubit registers, which can maintain quantum coherence over many cycles of their internal evolution \cite{Quach11, Zagoskin11}.

The machinery of linear quantum stochastic differential equations (QSDEs) provides a framework for modelling and analysis of a wide range of open quantum systems \cite{HP_1984,P_1992}, including those which arise in quantum optics. In the QSDEs,  the quantum noise   from the surroundings is modelled as a heat bath of external fields acting on a boson Fock space \cite{P_1992}. The class of linear QSDEs represents the Heisenberg evolution of pairs of conjugate operators in a multi-mode open quantum harmonic oscillator which is coupled to external bosonic fields. In this framework,  the modelling of translation invariant networks with field coupling between the subsystems and stability analysis for certain classes of nonlinear open quantum systems have been addressed, for example, in  \cite{VP2014,Ian2012robust,SVP_2015}.

For modelling purposes, the analysis of homogeneous large-scale networks can be simplified by using periodic boundary conditions (PBCs) \cite{Quach11, VP2014}. 
The PBCs rely on negligibility of boundary effects in a fragment of a translation invariant network consisting of a sufficiently large number of subsystems. This approximation technique is used for lattice models of interacting particle systems in statistical physics, including, for example, the Ising model of ferromagnetism \cite{Newman99}.

Quantum decoherence, as the loss of quantum nature  of the system state towards classical states,  is described by using entanglement measures and correlations between the nodes of bipartite systems \cite{Serafini05}. Among purely quantum mechanical phenomena,  entanglement (or inseparability) plays a crucial role in the fields of  quantum information and quantum computation \cite{NC_2000,Ho2009}. Various measures have been presented for entanglement; see, for example, \cite{Ho2009}. A criterion for checking whether a given system state is separable was proposed in \cite{Peres96}. This condition was shown to be necessary and sufficient for separability of Gaussian states  in two-mode quantum harmonic oscillators \cite{Sim2000}.

In this paper, we consider large fragments of a  translation invariant network of identical linear quantum stochastic systems endowed with the PBCs. The nodes of the network are directly coupled  to each other   (through bilateral energy transfer) within a finite interaction range. This network is modelled by linear QSDEs based on Hamiltonian and coupling parametrization of the corresponding multi-mode open quantum harmonic oscillator. The dynamic variables of the network satisfy the canonical commutation relations (CCRs). Using spatial Fourier transforms \cite{VP2014} and linear matrix inequalities (LMIs), we present sufficient conditions for stability of the network, thus extending the previously considered nearest neighbour interaction setting to the case  when each node of the network is coupled to a finite number of its neighbours. We also consider a mean square performance index for the stable network in the thermodynamic limit. In the case when the network consists of one-mode oscillators, we obtain sufficient and necessary conditions for quantum entanglement between nodes of the network in the Gaussian invariant state, based on the Peres-Horodecki-Simon separability criteria. By letting the number of nodes go to infinity, the results on stability and entanglement are extended to the infinite network of linear quantum stochastic systems. For the infinite network in the Gaussian invariant state, it is shown that the entanglement disappears for nodes which are sufficiently far away from each other. However, we provide a numerical example of a large sample of long fragments of a stable network with randomly generated parameters, where the bipartite entanglement between nodes is present within the interaction range.

The rest of the paper is organized as follows. Section~\ref{sec:not} provides principal notation. Section~\ref{sec:system} specifies the class of open quantum systems being considered. Section~\ref{sec:Ring_Topology} describes a model for a fragment of a translation  invariant network of such systems with PBCs. Section~\ref{sec:Spatial_fourier_tran} employs spatial Fourier transforms in order to obtain a more tractable description for the covariance  dynamics of the network. Section~\ref{sec:Stable_TIN} discusses a stability condition and a mean square performance functional. Section~\ref{sec:Sep_TIN} presents conditions for entanglement between nodes of the translation invariant network of one-mode oscillators. Section~\ref{sec:inf_chain} extends the results on stability and entanglement to the infinite chain of such systems. Section~\ref{sec:example} presents a numerical example to illustrate the results. Section~\ref{sec:conc} makes concluding remarks. Appendices~\ref{app:Lya_sol}, \ref{app:fact_on_ent} and \ref{app:fact_about_matrices} provide subsidiary material on a closed-form solution of the Lyapunov equation and entanglement criterion of bipartite systems, and some facts about matrices, respectively.

\section{NOTATION}\label{sec:not}

Unless specified otherwise,  vectors are organized as columns, and the transpose $(\cdot)^{\rT}$ acts on matrices with operator-valued entries as if the latter were scalars. For a vector $X$ of operators $X_1, \ldots, X_r$ and a vector $Y$ of operators $Y_1, \ldots, Y_s$, the commutator matrix is defined as an $(r\x s)$-matrix
$
    [X,Y^{\rT}]
    :=
    XY^{\rT} - (YX^{\rT})^{\rT}
$
whose $(j,k)$th entry is the commutator
$
    [X_j,Y_k]
    :=
    X_jY_k - Y_kX_j
$ of the operators $X_j$ and $Y_k$.
Also, $(\cdot)^{\dagger}:= ((\cdot)^{\#})^{\rT}$ denotes the transpose of the entry-wise operator adjoint $(\cdot)^{\#}$. In application to complex matrices,  $(\cdot)^{\dagger}$ reduces to the complex conjugate transpose  $(\cdot)^*:= (\overline{(\cdot)})^{\rT}$. Furthermore, $\mS_r$, $\mA_r$
 and
$
    \mH_r
    :=
    \mS_r + i \mA_r
$ denote
the subspaces of real symmetric, real antisymmetric and complex Hermitian  matrices of order $r$, respectively, with $i:= \sqrt{-1}$ the imaginary unit. Also, $I_r$ denotes the identity matrix of order $r$, positive (semi-) definiteness of matrices is denoted by ($\succcurlyeq$) $\succ$, and $\ox$ is the tensor product of spaces or operators (for example, the Kronecker product of matrices). The generalized (Moore-Penrose) inverse of a matrix $M$ is denoted by $M^+$, and $\br(M)$ denotes the spectral radius of $M$. 
The Kronecker delta is denoted by $\delta_{jk}$, and $\mU:= \{z\in \mC:\ |z|=1\}$ is the unit circle in the complex plain. The floor function is denoted by $\lfloor \cdot \rfloor$.
Also, $\bE \xi := \Tr(\rho \xi)$ denotes the quantum expectation of a quantum variable $\xi$ (or a matrix of such variables) over a density operator $\rho$ which specifies the underlying quantum state. For matrices of quantum variables, the expectation is evaluated entry-wise.

\section{LINEAR QUANTUM STOCHASTIC SYSTEMS}\label{sec:system}

We consider a quantum stochastic system interacting with external boson fields.  The system has $N$ subsystems with associated $n$-dimensional vectors $X_0, \ldots, X_{N-1}$ of dynamic variables which satisfy the CCRs
\begin{equation}
\label{xCCR}
    [X, X^{\rT}] = 2i \mathbfit{\Theta},
    \qquad
    \mathbfit{\Theta}:= I_N \ox \Theta,
    \qquad
    X:=
    {\small\begin{bmatrix}
        X_0\\
        \vdots\\
        X_{N-1}
    \end{bmatrix}}.
\end{equation}
Here, $\mathbfit{\Theta}$ is a block diagonal joint CCR matrix, where $\Theta \in \mA_{2n}$ is a nonsingular matrix which is usually of the form $\Theta = I_n \ox {\scriptsize\begin{bmatrix}0 & 1\\ -1 & 0\end{bmatrix}}$. The system variables evolve in time according to the QSDE
\begin{equation}
\label{dx}
  \rd X
  \!=\!
  \Big(
    i[H,X] \!-\!\frac{1}{2} \mathbfit{B}J\mathbfit{B}^{\rT} \mathbfit{\Theta}^{-1} X
  \Big)\rd t
  \!+\!
  \mathbfit{B} \rd W,
  \quad
  W \!:=\!\!
  {\small\begin{bmatrix}
  W_0\\
  \vdots \\
  W_{N-1}
  \end{bmatrix}}.\!\!
\end{equation}
Here, $W_0, \ldots, W_{N-1}$ are $2m$-dimensional vectors of quantum Wiener processes  with a  positive semi-definite It\^{o} matrix $\Omega \in \mH_{2m}$:
\begin{equation}
\label{WW}
    \rd W_j \rd W_k^{\rT}
    =
    \delta_{jk}\Omega \rd t,
    \qquad
    \Omega := I_{2m} + iJ,
\end{equation}
where $J := I_m \ox {\scriptsize\begin{bmatrix}
0 & 1\\ -1 & 0
\end{bmatrix}}$. The matrix $\mathbfit{B} \in \mR^{2nN\x 2mN}$ in (\ref{dx}) is related to a matrix $\mathbfit{M} \in \mR^{2mN \x 2nN}$ of linear dependence of the system-field coupling operators on the system variables by
\begin{equation}
\label{BM}
    \mathbfit{B} = 2\mathbfit{\Theta} \mathbfit{M}^{\rT}.
\end{equation}
The term $-\frac{1}{2} \mathbfit{B}J\mathbfit{B}^{\rT} \mathbfit{\Theta}^{-1} X$ in the drift of the QSDE (\ref{dx}) is the Gorini-Kossakowski-Sudarshan-Lindblad decoherence superoperator \cite{GKS_1976,L_1976} which acts on the system variables and is associated with the system-field interaction. Also, $H$ is the Hamiltonian which describes the self-energy of the system and is usually represented as a function of the system variables. 
In the case of an open quantum harmonic oscillator \cite{EB_2005,GZ_2004}, the Hamiltonian $H$ is a quadratic function of the system variables
\begin{equation}
\label{H0}
    H
    :=
    \frac{1}{2}
    X^{\rT} R X
    =
    \frac{1}{2}
    \sum_{j,k=0}^{N-1}
    X_j^\rT R_{jk} X_k,
\end{equation}
where the matrix $R:= (R_{jk})_{0\< j,k< N} \in \mS_{2nN}$ is formed from $(2n\x 2n)$-blocks $R_{jk}$. 
By substituting 
(\ref{H0}) into (\ref{dx}) and using the CCRs (\ref{xCCR}), it follows that the QSDE takes the form of a linear QSDE
\begin{equation}
\label{dx1}
  \rd X = A X\rd t+ \mathbfit{B} \rd W,
\end{equation}
where the system matrix $A \in \mR^{2nN\x 2nN}$ is given by
\begin{equation}\label{A}
    A
    :=
    2\mathbfit{\Theta} R - \frac{1}{2} \mathbfit{B}J\mathbfit{B}^{\rT} \mathbfit{\Theta}^{-1} .
\end{equation}

\section{LINEAR QUANTUM STOCHASTIC NETWORK WITH PERIODIC BOUNDARY CONDITIONS}\label{sec:Ring_Topology}

Suppose the open quantum harmonic oscillator of the previous section is a fragment of a translation invariant network which is organised as a one-dimensional chain of identical linear quantum stochastic  systems numbered by $k=0,\ldots, N-1$. Each node of the network interacts with the corresponding external boson field and hence, the joint network-field coupling matrix $\mathbfit{M}$ in (\ref{BM}) is block diagonal:
\begin{equation}
\label{MM}
  \mathbfit{M} = I_N \ox M,
\end{equation}
where $M \in \mR^{m\x n}$. Furthermore, the nodes are directly coupled to each other within a finite interaction range $d$. The fragment of the chain is assumed to be  large enough in the sense that $N > 2d$, and is  endowed with the PBCs, thus having a ring topology. A particular case of nearest neighbour interaction is depicted in Fig.~\ref{fig:Ring_Topology}.
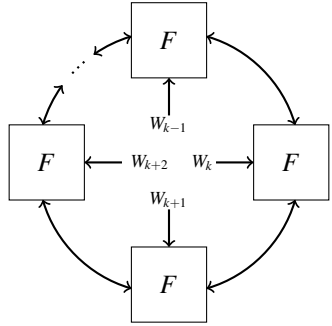
\begin{figure}[htbp]
	\centering
	\begin{tikzpicture}[scale=.5]
		\node at (0,1) {\scriptsize $W_{k-1}$};
		\node at (0.90,0) {\scriptsize $W_k$};
		\node at (0,-1) {\scriptsize $W_{k+1}$};
		\node at (-.5,0) {\scriptsize $W_{k+2}$};

		\draw [->, thick] (1.25,0) -- (2.25,0);
		\draw [->, thick] (-1.25,0) -- (-2.25,0);
		\draw [->, thick] (0,1.25) -- (0,2.25);
		\draw [->, thick] (0,-1.25) -- (0,-2.25);

		\draw  (2.25,1) rectangle (4.25,-1);
		\draw  (-1,4.25) rectangle (1,2.25);
		\draw  (-2.25,-1) rectangle (-4.25,1);
		\draw  (1,-2.25) rectangle (-1,-4.25);

		\node at (3.25,0) {$F$};
		\node at (0,3.25) {$F$};
		\node at (-3.25,0) {$F$};
		\node at (0,-3.25) {$F$};
		\draw [<->, thick] ([shift=(16.5:3.5)] 0,0) arc (16.5:73.5:3.5); 
		\draw [<->, thick] ([shift=(106.5:3.5)] 0,0) arc (106.5:125:3.5); 
		\draw [dotted, thick] ([shift=(130:3.5)] 0,0) arc (130.5:140:3.5);
		\draw [<->, thick] ([shift=(144:3.5)] 0,0) arc (144:163.5:3.5);
		\draw [<->, thick]([shift=(196.5:3.5)] 0,0) arc (196.5:253.5:3.5); 
		\draw [<->, thick]([shift=(-16.5:3.5)] 0,0) arc (-16.5:-73.5:3.5); 
		\end{tikzpicture}
\caption{A finite fragment of a translation invariant network of open quantum systems with direct coupling between the nearest neighbours. Also shown are the quantum noises which drive the nodes of the network.}
\label{fig:Ring_Topology}
\end{figure}
In the case of an arbitrary interaction range $d\>1$,  the Hamiltonian $H$ in (\ref{H0}) is completely specified by matrices $R_{\ell} = R_{-\ell}^{\rT} \in \mR^{2n\x 2n}$, with $\ell = 0, \pm 1, \ldots, \pm d$, as
\begin{equation}
\label{HR}
    H := \frac{1}{2}\sum_{j=0}^{N-1} \Big(X_j^{\rT}\sum_{\ell = -d}^d R_{\ell} X_{\mod(j-\ell, N)}\Big).
\end{equation}
Here, $j-\ell$ is computed modulo $N$ in accordance with the PBCs, and hence, the corresponding matrix $R$ is block circulant. The matrix $R_0 \in \mS_{2n}$ specifies the free Hamiltonian for each node, while $R_{\pm s}$  describe the energy coupling between the nodes which are at a distance $s=1, \ldots, d$ from each other (with the more distant nodes in the network not being directly coupled).
In view of the block diagonal structure of the matrices $\mathbfit{\Theta}$ and $\mathbfit{M}$ in (\ref{xCCR}) and (\ref{MM}), it follows from  (\ref{BM}), (\ref{A}) and (\ref{HR}), that the QSDE (\ref{dx1}) is representable as a set of coupled QSDEs for the dynamic variables of the nodes of the network:
\begin{equation}
	\label{dXj}
  		\rd X_j = \sum_{\ell =-d}^d A_{\ell} X_{\mod(j-\ell,N)}\rd t + B \rd W_j,
    \quad
    j = 0,\ldots, N-1.
\end{equation}
Here, similarly to $R$, the network dynamics matrix $A$ in (\ref{A}) also has a block circulant structure, with
\begin{align}
\label{AA}
    A_{\ell}
    & :=
    \left\{
        \begin{array}{ll}
            2 \Theta R_0 - \frac{1}{2} B J B^\rT \Theta^{-1} & {\rm if}\ \ell = 0\\
            2 \Theta R_{\ell} & {\rm if}\ \ell \ne 0
        \end{array}
    \right.,\\
\label{BB}
    B & := 2 \Theta M^{\rT}.
\end{align}

\section{SPATIAL FOURIER TRANSFORMS AND COVARIANCE DYNAMICS OF THE NETWORK}\label{sec:Spatial_fourier_tran}

The dynamics (\ref{dXj}) of the nodes of the network can be studied in the spatial frequency domain. Similarly to \cite{VP2014}, we will use the discrete Fourier transforms (DFTs) of the quantum processes $X_k$ and $W_k$ over the spatial subscript $k=0, ..., N-1$:
\begin{align}
	\label{equ:ztran:x_k}
	\cX_z(t)&:=\sum_{k=0}^{N-1}z^{-k}X_k(t),\\
	\label{equ:ztran:w_k}
	\cW_z(t)&:=\sum_{k=0}^{N-1}z^{-k}W_k(t),	
\end{align}
where $z\in \mU_N$ and
\begin{equation}
	\label{mUN}
	\mU_N := \big\{ \re^{\frac{ 2\pi i k}{N}} : \ k=0,\ldots,N-1 \big\}
\end{equation}
is the set of $N$th roots of unity. The quantum process  $\cX_z(t)$ is evolved in time by the QSDE
\begin{align}
	\nonumber	
	\rd \cX_z  &= \sum_{j=0}^{N-1}z^{-j}\rd X_j\\
\nonumber
    &= \sum_{j=0}^{N-1}z^{-j}
                \left(
                    \sum_{\ell=-d}^d A_{\ell} X_{\mod(j-\ell,N)}\rd t+ B \rd W_j
                \right)\\
			   &= \cA_z \cX_z \rd t+ B \rd \cW_z,
			   \label{equ:ztran_evolution}			
\end{align}
where the matrix  $A_z$ is defined in terms of the matrices (\ref{AA}) as
\begin{equation}
		\label{equ:A_z}
		\cA_z:= \sum_{k=-d}^{d} A_k z^{-k},
\end{equation}
and $B$ is given by (\ref{BB}).
Here, use is made of (\ref{equ:ztran:x_k}) and (\ref{equ:ztran:w_k}) together with the PBCs and the well-known properties of DFTs due to the assumption that $z\in \mU_N$. Furthermore, it follows from the CCRs (\ref{xCCR}) that
$$	
    [\cX_z, \cX_v^\dagger] = \sum_{j,k=0}^{N-1} z^{-j} v^k [X_j,X_k^\rT]
						=2i\Theta \sum_{k=0}^{N-1}(v/z)^k
						=2i\delta_{zv}N\Theta
$$
for all $z,v\in \mU_N$. By a similar reasoning, (\ref{WW}) implies that
\begin{align}
	\label{equ:Noise_Ito_net}
	\rd \cW_z \rd \cW_v^\dagger = N \delta_{zv} \Omega \rd t.
\end{align}
We will need the following matrix of second-order cross-moments of the quantum processes $\cX_z$ and $\cX_v$ given by (\ref{equ:ztran:x_k}):
	\begin{equation}
		\label{equ:DFT_cov}
		\cS_{z,v}(t)
        :=
        \bE
        (
            \cX_z(t)
            \cX_v(t)^{\dagger}
        ).	
	\end{equation}
Here and in what follows,  the quantum expectation $\bE(\cdot)$ is over the tensor product $\rho:= \varpi \ox \upsilon$  of the initial state $\varpi$ of the network and the vacuum state $\upsilon$ of the external fields.

\begin{lem}
	\label{lem:cov_net_time}
    For any $N$th roots of unity $z,v\in \mU_N$ from (\ref{mUN}), the matrix $\cS_{z,v}(t)$ in (\ref{equ:DFT_cov})
	satisfies the differential Lyapunov equation
	\begin{equation}
		\label{Sdot}
		\dot{\cS}_{z,v} = \cA_z \cS_{z,v} + \cS_{z,v} \cA_v^* + N\delta_{zv} B \Omega B^\rT
	\end{equation}		
for any $t\>0$, 	where the matrix $\cA_z$ is given by (\ref{equ:A_z}). 	Furthermore, suppose $\cA_z$ is Hurwitz for all $z\in \mU_N$. Then, for any $z, v \in \mU_N$, there exists a limit
\begin{equation}
\label{Szvinfdef}
    \cS_{z,v}(\infty):= \lim_{t\to +\infty} \cS_{z,v}(t)
\end{equation}
which is a unique solution of the algebraic Lyapunov equation
	\begin{equation}
		\label{equ:cov_inf_net}
		\cA_z \cS_{z,v}(\infty) + \cS_{z,v}(\infty) \cA_v^* + N\delta_{zv} B \Omega B^\rT=0.
	\end{equation}
\end{lem}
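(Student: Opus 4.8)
The plan is to obtain the differential Lyapunov equation (\ref{Sdot}) by applying the quantum It\^o product rule to $\cX_z\cX_v^{\dagger}$ and taking the quantum expectation, and then to derive the steady-state assertions from standard linear-systems arguments. First I would write the entrywise adjoint of the QSDE (\ref{equ:ztran_evolution}) as $\rd \cX_v^{\dagger} = \cX_v^{\dagger}\cA_v^* \rd t + \rd \cW_v^{\dagger} B^{\rT}$, using that $B$ is real (so $B^{\dagger} = B^{\rT}$) and that $(\cdot)^{\dagger}$ reduces to $(\cdot)^*$ on the complex matrix $\cA_v$. The quantum It\^o product rule then gives
\[
  \rd(\cX_z\cX_v^{\dagger})
  =
  (\rd \cX_z)\cX_v^{\dagger}
  +
  \cX_z(\rd \cX_v^{\dagger})
  +
  (\rd \cX_z)(\rd \cX_v^{\dagger}),
\]
and substituting the QSDE and its adjoint yields a drift $(\cA_z\cX_z\cX_v^{\dagger} + \cX_z\cX_v^{\dagger}\cA_v^*)\rd t$, a quantum-noise part $B\,\rd\cW_z\,\cX_v^{\dagger} + \cX_z\,\rd\cW_v^{\dagger}B^{\rT}$, and an It\^o correction which, since only the $\rd\cW_z\,\rd\cW_v^{\dagger}$ contribution survives, equals $B(\rd\cW_z\,\rd\cW_v^{\dagger})B^{\rT} = N\delta_{zv}B\Omega B^{\rT}\rd t$ by (\ref{equ:Noise_Ito_net}).

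Taking the quantum expectation $\bE(\cdot)$ over $\rho = \varpi\ox\upsilon$ eliminates the quantum-noise part: the processes multiplying the field increments $\rd\cW_z$ and $\rd\cW_v^{\dagger}$ are adapted (non-anticipating with respect to the field increments), so the corresponding Hudson--Parthasarathy integrals are centered martingales in the vacuum field state and have zero expectation. What remains is $\rd \cS_{z,v} = \big(\cA_z\cS_{z,v} + \cS_{z,v}\cA_v^* + N\delta_{zv}B\Omega B^{\rT}\big)\rd t$, which is (\ref{Sdot}).

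For the second part, (\ref{Sdot}) is a linear ODE with constant coefficients, whose solution is
\[
  \cS_{z,v}(t)
  =
  \re^{t\cA_z}\cS_{z,v}(0)\re^{t\cA_v^*}
  +
  N\delta_{zv}\int_0^t \re^{s\cA_z}B\Omega B^{\rT}\re^{s\cA_v^*}\rd s .
\]
If $\cA_z$ is Hurwitz for every $z\in\mU_N$, then $\re^{t\cA_z}$ and $\re^{t\cA_v^*}$ decay exponentially, so the first term vanishes and the integral converges as $t\to+\infty$; this proves the existence of the limit (\ref{Szvinfdef}) (a closed form is given in Appendix~\ref{app:Lya_sol}). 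Letting $t\to+\infty$ in (\ref{Sdot}) makes the left-hand side vanish, giving the algebraic Lyapunov equation (\ref{equ:cov_inf_net}); its solution is unique because the Sylvester operator $\cS\mapsto \cA_z\cS + \cS\cA_v^*$ has spectrum $\{\lambda + \overline{\mu}:\ \lambda\in\mathrm{spec}(\cA_z),\ \mu\in\mathrm{spec}(\cA_v)\}$, which lies in the open left half-plane when $\cA_z$ and $\cA_v$ are Hurwitz, making the operator invertible. The one delicate point, and the main obstacle, is the rigorous justification that the quantum-stochastic-integral terms are centered; this rests on the non-anticipating property of the integrands and the martingale property of such integrals in the vacuum state within the Hudson--Parthasarathy calculus, whereas the remaining steps are routine matrix ODE theory.
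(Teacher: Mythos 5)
Your proposal is correct and follows essentially the same route as the paper's proof: apply the quantum It\^o product rule to $\cX_z\cX_v^{\dagger}$, use (\ref{equ:Noise_Ito_net}) and the vanishing expectation of the adapted-process-times-noise-increment terms in the vacuum state to obtain (\ref{Sdot}), and then pass to the limit via the explicit solution of the linear ODE under the Hurwitz assumption. The extra details you supply (the adjoint QSDE, the Sylvester-operator spectrum argument for uniqueness) are consistent with, and merely elaborate on, the paper's argument.
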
	
\begin{proof}
A combination of the QSDE (\ref{equ:ztran_evolution}) with the quantum It{\^ o} formula leads to
	\begin{align*}
		\rd \cS_{z,v} &= \rd \bE(\cX_z\cX_v^\dagger)\\
					  & = \bE(\rd \cX_z\cX_v^\dagger + \cX_z \rd \cX_v^\dagger
					  +
					  \rd \cX_z \rd \cX_v^\dagger)\\
					  & = (\cA_z \cS_{z,v} + \cS_{z,v} \cA_v^* + N\delta_{zv} B \Omega B^\rT)\rd t,
	\end{align*}
which implies the ODE (\ref{Sdot}). Here, use is made of (\ref{equ:Noise_Ito_net}) and the fact that the forward increments of the quantum Wiener process in the vacuum state are uncorrelated with the adapted processes, and hence, $\bE(\cX_z \rd \cW_v^{\dagger})=0$.
The solution of (\ref{Sdot}) is described by
\begin{equation}
\label{sol}
		\cS_{z,v}(t)
        =
        \re^{t\cA_z} \cS_{z,v}(0) \re^{t\cA_v^*}
        +
        N\delta_{zv}
        \int_0^t
        \re^{\tau \cA_z}
        B \Omega B^\rT
        \re^{\tau \cA_v^*}\rd \tau\!
\end{equation}
for any $t\>0$. Therefore, under the assumption that $\cA_z$ is Hurwitz for all $z\in \mU_N$,  the solution (\ref{sol}) has the following limit (\ref{Szvinfdef}):
$$		
    \cS_{z,v}(\infty)
        =
        N\delta_{zv}
        \int_0^{+\infty}
        \re^{\tau \cA_z}
        B \Omega B^\rT
        \re^{\tau \cA_v^*}
        \rd \tau
$$
	which is a unique steady-state solution of (\ref{Sdot}) reducing to (\ref{equ:cov_inf_net}).
\end{proof}

Under the assumptions of Lemma~\ref{lem:cov_net_time}, the solution of (\ref{equ:cov_inf_net}) is representable as
\begin{equation}
\label{Szvinf}
    \cS_{z,v}(\infty) = N \delta_{zv} \cS_z,
\end{equation}
for all $z,v\in \mU_N$,
where $\cS_z=\cS_z^*\succcurlyeq 0$ is a unique solution of the algebraic Lyapunov equation
\begin{equation}
\label{cSz}
    \cA_z \cS_{z} + \cS_{z} \cA_z^* + B \Omega B^\rT=0.
\end{equation}
The function $\mU \ni z\mapsto S_z$ is the spatial spectral density \cite{VP2014} which encodes the covariance structure of the dynamic variables of the network in the invariant Gaussian quantum state (in the limit of infinite time and infinite network size). Also note that the cross-moments of the dynamic variables at the $j$th and $k$th nodes of the network can be recovered from the matrices (\ref{equ:DFT_cov}) by applying the inverse DFT to (\ref{equ:ztran:x_k}):
	\begin{align}
    \nonumber
		\bE(X_j(t) X_k(t)^\rT) &= \frac{1}{N^2} \bE \sum_{z,v \in \mU_N} z^j v^{-k}\cX_z(t)\cX_v(t)^\dagger \\
    \label{equ:cros_cov}
						 &= \frac{1}{N^2} \sum_{z,v \in \mU_N} z^j v^{-k}\cS_{z,v}(t)
	\end{align}		
for all $0\< j,k< N$.
In fact, the right-hand side of (\ref{equ:cros_cov}) is the two-dimensional inverse DFT of the matrices $\cS_{z,1/v}$.

\section{STABILITY AND MEAN SQUARE PERFORMANCE OF THE NETWORK}\label{sec:Stable_TIN}

Recall that the network matrix $A$ in (\ref{A}), which corresponds to the set of QSDEs (\ref{dXj}),  has a block circulant structure specified by the matrices (\ref{AA}). Therefore, $A$ is Hurwitz if and only if the matrix $\cA_z$ in (\ref{equ:A_z}) is Hurwitz for all $z \in \mU_N$. The latter property holds for any fragment length $N\> 1$ if and only if
\begin{equation}
\label{stab}
	\max_{z \in \mU} \br(\text{e}^{\cA_z}) < 1
\end{equation}
(since $\cA_z$ is a continuous function of $z$ over the unit circle, and the set $\bigcup_{N=1}^{+\infty}\mU_N$ is dense in $\mU$). In what follows, the translation invariant network, described in Section~\ref{sec:Ring_Topology}, is called stable if it satisfies (\ref{stab}).
The following lemma provides a sufficient condition for stability of the network under consideration.
\begin{lem}
	\label{lem:stab}
Suppose there exist positive definite matrices $\cS, Q\in \mS_{2n}$ satisfying the LMI
\begin{equation}
	\label{equ:LMI}
	{\small
	\begin{bmatrix}
		A_0 \cS + \cS A_0^\rT+Q & \cS & \wt{A}\\			
		\cS                         & -I_{2n}      &   0\\
		\wt{A}^\rT                   &  0      &   -\frac{1}{2d}I_{2nd}
	\end{bmatrix}} \preccurlyeq 0,
\end{equation}			
where $\wt{A}\in \mR^{2n \x 4nd}$ is an auxiliary matrix defined in terms of (\ref{AA}) by
\begin{equation}
\label{AAAA}
    \wt{A}:= {\begin{bmatrix} A_{-d} & \ldots & A_{-1} & A_{1} & \ldots & A_{d} \end{bmatrix}}.
\end{equation}
Then the quantum network being considered is stable in the sense of (\ref{stab}).
\end{lem}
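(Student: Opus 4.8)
The plan is to use the positive definite matrix $\cS$ from the LMI (\ref{equ:LMI}) as a \emph{common} Lyapunov matrix for the whole family $\{\cA_z:\ z\in\mU\}$ of matrices (\ref{equ:A_z}). First I would reduce (\ref{equ:LMI}) by taking a Schur complement with respect to its (negative definite) lower-right block $-\diag\!\big(I_{2n},\tfrac{1}{2d}I\big)$, whose inverse is $-\diag\!\big(I_{2n},2d\,I\big)$: eliminating the last two block rows and columns shows that (\ref{equ:LMI}) is equivalent to
\[
    A_0\cS + \cS A_0^\rT + \cS^2 + 2d\, \wt{A}\wt{A}^\rT + Q \preccurlyeq 0,
\]
that is, $A_0\cS + \cS A_0^\rT + \cS^2 + 2d\, \wt{A}\wt{A}^\rT \preccurlyeq -Q \prec 0$.

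Next I would factor $\cA_z$ through the auxiliary matrix $\wt A$ of (\ref{AAAA}). With $u(z):= (z^d,\ldots,z,z^{-1},\ldots,z^{-d})^\rT\in\mC^{2d}$ and $\zeta(z):= u(z)\ox I_{2n}$, the ordering of the blocks in (\ref{AAAA}) gives $\cA_z = A_0 + \wt{A}\zeta(z)$ for all $z$, while $\zeta(z)^*\zeta(z) = \big(u(z)^*u(z)\big)I_{2n} = 2d\, I_{2n}$ for $z\in\mU$, since $|z^k|=1$. Applying the matrix Young inequality $UV^* + VU^*\preccurlyeq \alpha\, UU^* + \tfrac{1}{\alpha}VV^*$ with $U=\wt{A}$, $V=\cS\zeta(z)^*$ and $\alpha=2d$ then yields $\wt{A}\zeta(z)\cS + \cS\zeta(z)^*\wt{A}^\rT \preccurlyeq 2d\, \wt{A}\wt{A}^\rT + \cS^2$, and hence, by the Schur-complement form of the LMI obtained above,
\[
    \cA_z\cS + \cS\cA_z^* = A_0\cS + \cS A_0^\rT + \wt{A}\zeta(z)\cS + \cS\zeta(z)^*\wt{A}^\rT \preccurlyeq A_0\cS + \cS A_0^\rT + 2d\, \wt{A}\wt{A}^\rT + \cS^2 \preccurlyeq -Q \prec 0
\]
for every $z\in\mU$.

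Finally, since $\cS\succ 0$, conjugating the last inequality by $\cS^{-1}$ gives the strict Lyapunov inequality $\cA_z^*\cS^{-1} + \cS^{-1}\cA_z \preccurlyeq -\cS^{-1}Q\cS^{-1}\prec 0$ with positive definite matrix $\cS^{-1}$, so $\cA_z$ is Hurwitz, i.e. $\br(\re^{\cA_z})<1$, for every $z\in\mU$. As $z\mapsto \br(\re^{\cA_z})$ is continuous on the compact unit circle $\mU$, the maximum in (\ref{stab}) is attained and is strictly less than one, which is the asserted stability.

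The step I expect to be the main obstacle is the bookkeeping in the factorisation $\cA_z = A_0 + \wt{A}\zeta(z)$ — lining up the blocks $A_{-d},\ldots,A_{-1},A_1,\ldots,A_d$ of (\ref{AAAA}) with the powers of $z$ in $u(z)$ and checking that $\zeta(z)^*\zeta(z) = 2d\, I_{2n}$ on $\mU$ — together with choosing the scaling $\alpha=2d$ in Young's inequality precisely so that the resulting bound $2d\,\wt{A}\wt{A}^\rT + \cS^2$ reproduces exactly the Schur complement of (\ref{equ:LMI}). The remaining ingredients (the Schur complement, the Young estimate, and compactness of $\mU$) are routine.
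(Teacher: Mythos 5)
Your proposal is correct and follows essentially the same route as the paper's proof: the Schur complement of the LMI gives $A_0\cS+\cS A_0^{\rT}+\cS^2+2d\,\wt{A}\wt{A}^{\rT}+Q\preccurlyeq 0$, and your Young-inequality step with $\zeta(z)=u(z)\ox I_{2n}$ and $\alpha=2d$ is exactly the paper's completing-the-square bound based on $\sum_{j\neq 0}A_jz^{-j}=\wt{A}(\Delta\ox I_{2n})$ and $\Delta\Delta^*\preccurlyeq 2dI_{2d}$, leading to the same Lyapunov inequality $\cA_z\cS+\cS\cA_z^*\preccurlyeq -Q\prec 0$ with the common matrix $\cS$. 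Your explicit continuity/compactness remark at the end only spells out what the paper notes before the lemma, so there is nothing substantive to change.
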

\begin{proof}
	In view of the Schur complement lemma \cite{zhang2006} and (\ref{AAAA}), the LMI (\ref{equ:LMI}) implies that
\begin{equation}
\label{Schur}
			A_0 \cS + \cS A_0^\rT + Q + \cS^2 + 2d\sum_{j \neq 0} A_j A_j^\rT \preccurlyeq
			0.
\end{equation}
For any $z\in \mU$, the vector $\Delta:={\begin{bmatrix} z^{d},
	\ldots, z, z^{-1}, \ldots, z^{-d}\end{bmatrix}}^{\rT}\in \mC^{2d}$ satisfies $|\Delta|^2 \< 2d$, and hence, $\Delta\Delta^* \preccurlyeq 2dI_{2d}$. Therefore, application of the completing-the-square technique leads to
	\begin{equation}
    \label{square}
		\sum_{j\neq 0} A_j z^{-j} \cS
			+ \cS \sum_{j \neq 0} A_j^\rT z^{j}
            \preccurlyeq
			2d\sum_{j \neq 0} A_j A_j^\rT + \cS^2,
	\end{equation}
where use is made of the relations $\sum_{j\neq 0} A_j z^{-j} = \wt{A}(\Delta\ox I_{2n})$ and $\wt{A}(\Delta\ox I_{2n})(\wt{A}(\Delta\ox I_{2n}))^* = \wt{A}((\Delta \Delta^*)\ox I_{2n})\wt{A}^{\rT}\preccurlyeq 2d \wt{A}\wt{A}^{\rT}$. A combination of (\ref{Schur}) and (\ref{square}) implies that
		\begin{equation*}
		A_0 \cS + \cS A_0^\rT + \sum_{j \neq 0} A_j z^{-j} \cS
			+ \cS \sum_{j \neq 0} A_j^\rT z^{j} +Q \preccurlyeq 0,
		\end{equation*}
	which is equivalent to
	\begin{align}
			\label{equ:Lyp_in}
			\small
			\cA_z \cS + \cS \cA_z^* + Q \preccurlyeq 0.
	\end{align}
	Since $\cS$ and $Q$ in (\ref{equ:LMI}) were assumed to be positive definite, (\ref{equ:Lyp_in}) implies that $\cA_z$ is Hurwitz for all $z
	\in \mU$, which proves (\ref{stab}).
\end{proof}	

Note that 	the condition for $A_0$ to be Hurwitz is a prerequisite for feasibility of the LMI (\ref{equ:LMI}). 
Now, assuming that the stability condition (\ref{stab}) is satisfied, we will consider the following mean square performance measure \cite{VP2014} for the network of size $N$ at time $t\> 0$:
\begin{equation}
	\label{equ:cost_fun}
	\cE_N(t)
    :=
    \frac{1}{N}
    \bE
    \sum_{j,k=0}^{N-1}
    X_j(t)^\rT \sigma_{j-k} X_k(t).
\end{equation}
Here, $\sigma_k$ is a given $\mR^{2n \x 2n}$-valued sequence which satisfies $\sigma_{−-k} = \sigma_k^\rT$
for all integers $k$ and specifies a real symmetric block Toeplitz weighting matrix $(\sigma_{j-k})_{0\<j,k<N}$. The block Toeplitz structure of the weighting matrix in (\ref{equ:cost_fun}) corresponds to the translation invariance of the quantum network being considered.
A matrix-valued map $\mU \ni z \mapsto \Sigma_z = \Sigma_z^*$, defined by
\begin{equation}
	\label{Sigma_z}
	\Sigma_z:=\sum_{k=-\infty}^{+\infty} z^{-k} \sigma_k,
\end{equation}
describes the spectral density of the weighting sequence. In order to ensure the absolute convergence of the series in (\ref{Sigma_z}), it is assumed that $\sum^{+\infty}_{k=-\infty} \|\sigma_k\| < +\infty$, which also makes $\Sigma_z$ a continuous function of $z$. The fulfillment of the condition $\Sigma_z \succcurlyeq 0$ for all $z \in \mU$ is necessary and sufficient for $(\sigma_{j-k})_{0\<j,k<N} \succcurlyeq 0$ to hold  for all $N\>1$; see, for example \cite{gren58}. In this case, the sum on the right-hand side of (\ref{equ:cost_fun}) is a positive semi-definite operator, and hence, $\cE_N \>0$.

\begin{lem}
\label{lem:cost_fun_inf}
Suppose the stability condition (\ref{stab}) is satisfied. Then for any given length $N$ of the network fragment,  the cost functional in (\ref{equ:cost_fun})  has the following infinite time horizon limit
\begin{equation}	
\label{equ:cost_fun:inf}
	\cE_N(\infty)
    :=\lim_{t\to +\infty} \cE_N(t)
		 =
    \frac{1}{N}
    \sum_{z \in \mU_N}
    \Tr(\wh{\Sigma}_N(z)\cS_z).
\end{equation}
Here,
\begin{equation}
	\label{equ:spec_dens}
	\wh{\Sigma}_N(z):=\sum_{k=1-N}^{N-1}\Big(1-\frac{|k|}{N}\Big)z^{-k}\sigma_k,
\end{equation}
and the matrix $\cS_z$ is the unique solution of the algebraic Lyupunov equation (\ref{cSz}).
\end{lem}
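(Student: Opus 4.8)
The plan is to rewrite the quadratic form in (\ref{equ:cost_fun}) as a trace, substitute the inverse-DFT representation (\ref{equ:cros_cov}) of the node cross-covariances, let $t\to+\infty$ using Lemma~\ref{lem:cov_net_time}, and finally recognise the Ces\`aro-weighted spectral density $\wh{\Sigma}_N$.

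First, since $\sigma_{j-k}$ has scalar entries, one has $X_j^\rT \sigma_{j-k} X_k = \Tr(\sigma_{j-k} X_k X_j^\rT)$ as operators, so that $\bE(X_j^\rT\sigma_{j-k}X_k) = \Tr(\sigma_{j-k}\,\bE(X_k X_j^\rT))$. Substituting (\ref{equ:cros_cov}) (with the roles of $j$ and $k$ interchanged) into (\ref{equ:cost_fun}) yields
\[
    \cE_N(t) = \frac{1}{N^3}\sum_{j,k=0}^{N-1}\ \sum_{z,v\in\mU_N} z^k v^{-j}\,\Tr\big(\sigma_{j-k}\cS_{z,v}(t)\big).
\]

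Next, under the stability condition (\ref{stab}) the matrix $\cA_z$ is Hurwitz for all $z\in\mU$, in particular for all $z\in\mU_N$, so Lemma~\ref{lem:cov_net_time} applies and, by (\ref{Szvinf}), $\cS_{z,v}(\infty)=N\delta_{zv}\cS_z$ with $\cS_z$ the solution of (\ref{cSz}). All sums over $j,k,z,v$ are finite, so the limit $t\to+\infty$ passes termwise; the Kronecker delta collapses the sum over $(z,v)$ to a single sum over $z$ and one power of $N$ cancels, giving
\[
    \cE_N(\infty) = \frac{1}{N^2}\sum_{j,k=0}^{N-1}\ \sum_{z\in\mU_N} z^{k-j}\,\Tr\big(\sigma_{j-k}\cS_z\big).
\]

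Finally, I would reindex by $\ell:=j-k$: for each $\ell$ with $|\ell|\le N-1$ the number of pairs $(j,k)$ with $0\<j,k<N$ and $j-k=\ell$ equals $N-|\ell|$, while $z^{k-j}=z^{-\ell}$ and $\sigma_{j-k}=\sigma_\ell$, hence
\[
    \cE_N(\infty) = \frac{1}{N}\sum_{z\in\mU_N}\sum_{\ell=1-N}^{N-1}\Big(1-\tfrac{|\ell|}{N}\Big)z^{-\ell}\,\Tr\big(\sigma_\ell\cS_z\big) = \frac{1}{N}\sum_{z\in\mU_N}\Tr\big(\wh{\Sigma}_N(z)\cS_z\big),
\]
the last equality being the definition (\ref{equ:spec_dens}) of $\wh{\Sigma}_N$, which is exactly (\ref{equ:cost_fun:inf}). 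I do not anticipate a genuine obstacle: this identity uses only the finitely many $\sigma_\ell$ with $|\ell|\le N-1$, so the absolute-summability hypothesis on $(\sigma_k)$ is not needed here (it is only used later to make $\Sigma_z$ and the thermodynamic limit meaningful). The only points deserving care are the combinatorial bookkeeping in the reindexing step and the remark that $\wh{\Sigma}_N(z)$ is Hermitian (from $\sigma_{-k}=\sigma_k^\rT$ and $\bar z=z^{-1}$ on $\mU$), which guarantees that each trace on the right-hand side, and hence $\cE_N(\infty)$, is real, consistent with $\cE_N(t)\>0$.
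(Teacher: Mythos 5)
Your proof is correct and follows essentially the same route as the paper's: substitute the inverse-DFT representation (\ref{equ:cros_cov}) of the node cross-covariances, pass to the infinite-time limit termwise via Lemma~\ref{lem:cov_net_time} and (\ref{Szvinf}), and reindex over $\ell=j-k$ to recognise the Ces\`aro-weighted density (\ref{equ:spec_dens}). The only step whose stated justification is incomplete is the cyclic-trace identity $X_j^{\rT}\sigma_{j-k}X_k=\Tr(\sigma_{j-k}X_kX_j^{\rT})$, which reorders operator products and is not implied merely by $\sigma_{j-k}$ having scalar entries; it does hold here because variables at distinct nodes commute by (\ref{xCCR}) and, for $j=k$, the commutator correction $2i\Tr(\sigma_0\Theta)$ vanishes ($\sigma_0$ symmetric, $\Theta$ antisymmetric), whereas the paper sidesteps the issue by writing $\bE(X_j^{\rT}\sigma_{j-k}X_k)=\Tr(\sigma_{j-k}^{\rT}\,\bE(X_jX_k^{\rT}))$, which involves no reordering.
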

\begin{proof}
The relation (\ref{equ:cros_cov}) allows the expectation on the right-hand side of (\ref{equ:cost_fun}) to be represented in the spatial frequency domain as
	\begin{align}
		\label{equ:lem:weighted_cov}
		\nonumber
		\bE(X_j(t)^\rT \sigma_{j-k}X_k(t))&=\Tr(\sigma_{j-k}^\rT \bE(X_j(t)X_k(t)^\rT))\\
							        	  &=\frac{1}{N^2}\sum_{z,v \in \mU_N} z^jv^{-k}
							        	  \Tr(\sigma_{k-j}\cS_{z,v}(t)), \!\!
	\end{align}
	where use is also made of the property $\sigma_k^\rT=\sigma_{-k}$ of the weighting sequence. Hence, in view of	(\ref{Szvinfdef}), (\ref{Szvinf}) and (\ref{equ:spec_dens}), the infinite time horizon limit of $\cE_N$  takes the form
	\begin{align*}
		\cE_N(\infty)
        &=
        \frac{1}{N^3}
        \sum_{j,k=0}^{N-1}\,
        \sum_{z,v \in \mU_N}
        z^jv^{-k}
		\Tr(\sigma_{k-j}\cS_{z,v}(\infty))\\
		&=
        \frac{1}{N^2}
        \sum_{j,k=0}^{N-1}\,
        \sum_{z \in \mU_N} z^{j-k}
		\Tr(\sigma_{k-j}\cS_z)\\
		&=
        \frac{1}{N}
        \sum_{z \in \mU_N}  \Tr(\wh{\Sigma}_N(z)\cS_{z}),
	\end{align*}	
which establishes (\ref{equ:cost_fun:inf}).
\end{proof}	

\section{BIPARTITE ENTANGLEMENT CRITERION FOR THE NETWORK} \label{sec:Sep_TIN}

For any fixed but otherwise arbitrary nodes $j\ne k$ of the translation invariant network under consideration, we denote by
\begin{equation}
	\label{equ:xi}
	\Ups(t)
    :=
    {\small\begin{bmatrix}
        X_j(t)\\
        X_k(t)
        \end{bmatrix}}
\end{equation}
the augmented vector of dynamic variable of the corresponding bipartite system. In a stable network driven by external fields in the vacuum state, the mean values of the dynamic variables asymptotically vanish and will be considered to be zero in what follows.
Since the joint unitary evolution of the network and its environment preserves the CCRs (\ref{xCCR}), the quantum covariance matrix of the vector $\Ups$ in (\ref{equ:xi}) takes the form
	\begin{align}
\nonumber
		S(t)
        := &
        \bE(
            \Ups(t)
            \Ups(t)^\rT
        )
        =
        \Re S(t) +
        i
        {\small\begin{bmatrix}
            \Theta & 0\\
            0 & \Theta
        \end{bmatrix}}\\
\label{S}									
        = &
        {\small\begin{bmatrix}
            S_{11}(t) &  S_{12}(t) \\
			S_{21}(t) &  S_{22}(t)
        \end{bmatrix}}
        \succcurlyeq
        0,
\end{align}
where the off-diagonal blocks are real matrices satisfying $S_{12}^{\rT} = S_{21}$. Note that in view of (\ref{equ:cros_cov}), the matrix $S$
is representable in terms of the matrices (\ref{equ:DFT_cov}) whose evolution is described by Lemma~\ref{lem:cov_net_time}:
	\begin{align}
		\label{equ:Sigma_ij}
		S(t)
        & =
        \frac{1}{N^2}
        \sum_{z,v \in \mU_N}
        {\small\begin{bmatrix}
            (z/v)^j     & z^jv^{-k}\\
            z^kv^{-j}   & (z/v)^k
        \end{bmatrix}}
        \ox
        \cS_{z,v}(t).
	\end{align}	
Gaussian quantum states \cite{KRP_2010} are completely specified by the covariance matrix of the quantum variables and their mean vector, with the latter not affecting whether the state is separable or entangled.  Therefore, separability or entanglement of such states can be studied in terms of the covariance matrices \cite{Werner01}. Furthermore, the Gaussian nature of system states is inherited by subsystems and is preserved by linear QSDEs driven by vacuum boson fields   (provided the initial state is Gaussian). For what follows, we assume that the network consists of one-mode oscillators (that is, $n=1$) and is initialized at a Gaussian state. Then by applying the Peres-Horodecki-Simon criterion \cite{Sim2000}, it follows that the above described bipartite  system (formed from the $j$th and $k$th nodes of the network) is in a separable Gaussian state if and only if
\begin{align}
\nonumber
	\Lambda(t)
    :=  &
        \Re S(t)
        +
        i
        {\small\begin{bmatrix}
            \Theta  & 0\\
            0       & -\Theta
        \end{bmatrix}}\\
\label{equ:Lambda_chi_relation}
        = &
        S(t)
        +i
        {\small\begin{bmatrix}
            0 & 0\\
            0 & -2\Theta
        \end{bmatrix}}
        =
        {\small\begin{bmatrix}
            S_{11}(t)      & S_{12}(t) \\
            S_{12}(t)^\rT  & \overline{S_{22}(t)}
        \end{bmatrix}}
        \succcurlyeq 0,
\end{align}
where $S(t)$ is the quantum covariance $(4\x 4)$-matrix in (\ref{S}).
An equivalent alternative form of the separability criterion 
is provided by Lemma~\ref{lem:det_neg} of Appendix~\ref{app:fact_on_ent} which reduces  to checking the sign of $\det \Lambda(t)$ instead of verifying the $(4\x 4)$-matrix inequality in (\ref{equ:Lambda_chi_relation}). More precisely, according to Lemma~\ref{lem:det_neg}, the bipartite system is in an entangled Gaussian state if and only if the quantum covariance $(2\x 2)$-matrices $S_{11}$ and $S_{22}$ for the constituent nodes  $j$ and $k$ are positive definite and
$$		
    \det ( \overline{S_{22}} - S_{12}^\rT S_{11}^{-1} S_{12})< 0.
$$
 This is closely related to the following identity (see part (iii) of Appendix~\ref{app:fact_about_matrices} for details):
\begin{align*}	
	\det \Lambda(t)
    &=
    \det S_{11}
    \det ( \overline{S_{22}} - S_{12}^\rT S_{11}^{+} S_{12})	\\
    &=
    \det S_{22}
    \det ( S_{11} - S_{12}\overline{S_{22}}^{+} S_{12}^\rT ).
\end{align*}

\begin{thm}
\label{thm:entanglement}
Suppose the translation invariant network consists of $N$ one-mode open quantum harmonic oscillators and is stable in the sense of (\ref{stab}). Then the invariant Gaussian state of the bipartite system at the nodes $j\ne k$ is separable if and only if the infinite time horizon limit of the matrix $\Lambda(t)$ in (\ref{equ:Lambda_chi_relation}) satisfies
	\begin{align}
		\label{equ:cond_sep:1}
		\Lambda(\infty)
        :=
        \lim_{t\to +\infty}
        \Lambda(t)
        =
        \frac{1}{N}
        \sum_{z \in \mU_N}		
		{\small\begin{bmatrix}
            \cS_{z}             & z^{j-k} \cS_{z} \\
		  z^{k-j} \cS_{z}       & \overline{\cS_{z}}
        \end{bmatrix}}
		\succcurlyeq 0,
	\end{align}
	where $\cS_z$ is the unique solution of the algebraic Lyapunov equation (\ref{cSz}).
\end{thm}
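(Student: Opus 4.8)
The plan is to apply the Peres--Horodecki--Simon criterion (recalled just before the statement) directly to the Gaussian invariant state of the bipartite system at the nodes $j\ne k$, and then to evaluate the associated matrix $\Lambda(\infty)$ in closed form by means of Lemma~\ref{lem:cov_net_time}. Since the network is stable in the sense of (\ref{stab}), the vacuum-driven linear QSDEs (\ref{dXj}) take the (Gaussian) network state to a unique Gaussian invariant state, whose restriction to the pair of nodes $j,k$ is again Gaussian with covariance matrix $S(\infty):=\lim_{t\to +\infty}S(t)$ for $S(t)$ in (\ref{S}). Because $\Lambda(t)$ depends affinely on $S(t)$ through (\ref{equ:Lambda_chi_relation}), the limit $\Lambda(\infty):=\lim_{t\to +\infty}\Lambda(t)=S(\infty)+i\,\blockdiag(0,-2\Theta)$ exists, and the criterion applied to the invariant state says that this state is separable if and only if $\Lambda(\infty)\succcurlyeq 0$. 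It then remains only to identify $\Lambda(\infty)$ with the right-hand side of (\ref{equ:cond_sep:1}).

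First I would let $t\to +\infty$ in the spatial-frequency representation (\ref{equ:Sigma_ij}). By Lemma~\ref{lem:cov_net_time} and (\ref{Szvinfdef})--(\ref{Szvinf}), the cross-moments satisfy $\cS_{z,v}(\infty)=N\delta_{zv}\cS_z$ and hence vanish unless $z=v$, which collapses the double sum over $\mU_N$ into a single one:
\[
    S(\infty)
    =
    \frac{1}{N}\sum_{z\in\mU_N}
    \begin{bmatrix} \cS_z & z^{j-k}\cS_z\\ z^{k-j}\cS_z & \cS_z \end{bmatrix},
\]
where $\cS_z=\cS_z^*\succcurlyeq 0$ is the unique solution of (\ref{cSz}). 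Adding the constant shift $i\,\blockdiag(0,-2\Theta)$ from (\ref{equ:Lambda_chi_relation}) alters only the lower-right block, so it suffices to rewrite $\frac{1}{N}\sum_{z\in\mU_N}\cS_z-2i\Theta$. The point is that the upper-left block $\frac{1}{N}\sum_{z\in\mU_N}\cS_z$ of $S(\infty)$ equals $\bE(X_j(\infty)X_j(\infty)^{\rT})$, and because the joint network--field evolution preserves the CCRs (\ref{xCCR}), the antisymmetric part of this Hermitian matrix is $i\Theta$; thus $\Im\big(\frac{1}{N}\sum_{z\in\mU_N}\cS_z\big)=\Theta$, so that $\frac{1}{N}\sum_{z\in\mU_N}\cS_z-2i\Theta=\overline{\frac{1}{N}\sum_{z\in\mU_N}\cS_z}=\frac{1}{N}\sum_{z\in\mU_N}\overline{\cS_z}$. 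This converts the lower-right block into $\frac{1}{N}\sum_{z\in\mU_N}\overline{\cS_z}$ and yields (\ref{equ:cond_sep:1}).

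The only genuinely nonroutine step is the identity $\Im\big(\frac{1}{N}\sum_{z\in\mU_N}\cS_z\big)=\Theta$, which cannot be obtained term by term since the individual $\cS_z$ are complex Hermitian with $z$-dependent imaginary parts. Besides the CCR-preservation argument above, I would verify it directly from (\ref{AA})--(\ref{BB}): using $\Theta^{\rT}=-\Theta$, $J^{\rT}=-J$ and $B=2\Theta M^{\rT}$ one checks that $\cA_z\Theta+\Theta\cA_z^*=-BJB^{\rT}$ for every $z\in\mU$, whence $\cS_z-i\Theta$ solves $\cA_z(\cS_z-i\Theta)+(\cS_z-i\Theta)\cA_z^*+BB^{\rT}=0$; this Lyapunov equation has a real free term and satisfies $\overline{\cA_z}=\cA_{1/z}$, so uniqueness gives $\overline{\cS_z-i\Theta}=\cS_{1/z}-i\Theta$, and averaging over $\mU_N$ (which is invariant under $z\mapsto 1/z$) produces the imaginary part $\Theta$. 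Everything else --- the collapse of the double sum, the affine dependence of $\Lambda$ on $S$, the existence of the limits under (\ref{stab}), and the closedness of the positive semi-definite cone --- is immediate from Lemma~\ref{lem:cov_net_time} and the separability criterion.
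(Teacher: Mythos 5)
Your proposal is correct and follows essentially the same route as the paper: stability plus Lemma~\ref{lem:cov_net_time} give $\cS_{z,v}(\infty)=N\delta_{zv}\cS_z$, which collapses the double sum in (\ref{equ:Sigma_ij}) to the stated $S(\infty)$, and the Peres--Horodecki--Simon criterion in the form (\ref{equ:Lambda_chi_relation}) then turns the lower-right block into $\frac{1}{N}\sum_{z\in\mU_N}\overline{\cS_z}$, yielding (\ref{equ:cond_sep:1}). Your additional verification that $\Im\big(\frac{1}{N}\sum_{z\in\mU_N}\cS_z\big)=\Theta$ via the identity $\cA_z\Theta+\Theta\cA_z^*=-BJB^{\rT}$ and uniqueness of the Lyapunov solution is a correct (and welcome) explicit check of what the paper takes directly from the CCR-preservation structure already encoded in (\ref{S}) and (\ref{equ:Lambda_chi_relation}).
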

\begin{proof}
In view of Lemma~\ref{lem:cov_net_time}, the fulfillment of the stability condition (\ref{stab}) implies (\ref{Szvinf}) whose combination  with (\ref{equ:Sigma_ij}) leads to
$$
		S(\infty)
        :=
        \lim_{t\to +\infty}
        S(t)
        =
        \frac{1}{N}
        \sum_{z \in \mU_N}		
		{\small\begin{bmatrix}
            \cS_{z}             & z^{j-k} \cS_{z} \\
		  z^{k-j} \cS_{z}       & \cS_{z}.
        \end{bmatrix}}
$$
The latter allows the separability criterion (\ref{equ:Lambda_chi_relation}) to be represented for the corresponding limit matrix $\Lambda(\infty)$ in the form (\ref{equ:cond_sep:1}).
\end{proof}

A closed-form solution of the algebraic Lyapunov equation (\ref{cSz}) for the one-mode case (which is considered in Theorem~\ref{thm:entanglement}) is presented in Appendix~\ref{app:Lya_sol}.

\section{INFINITE CHAIN OF LINEAR QUANTUM STOCHASTIC SYSTEMS} \label{sec:inf_chain}

We will now consider the infinite one-dimensional chain of directly coupled linear quantum stochastic systems whose finite fragments were introduced in Section~\ref{sec:Ring_Topology}.  A particular case of nearest neighbour interaction is depicted in Fig.~\ref{fig:chain_direct_coupling}.
\begin{figure}[htbp]
\begin{center}
	\begin{tikzpicture}[scale=.6]
		\node at (0,0) {$F$};
		\node at (-3,0) {$F$};
		\node at (3,0) {$F$};
		\node at (5.5,0) {...};
		\node at (-5.5,0) {...};

		\draw  (-1,-1) rectangle (1,1);
		\draw  (-4,-1) rectangle (-2,1);
		\draw  (2,-1) rectangle (4,1);

		\draw [<->, thick] (-1,0) -- (-2,0);
		\draw [<->, thick] (1,0) -- (2,0);
		\draw [<->, thick] (4,0) -- (5,0);
		\draw [<->, thick] (-4,0) -- (-5,0);

		\node at (-3,2.25) {\scriptsize $W_{k-1}$};
		\node at (0,2.25) {\scriptsize $W_k$};
		\node at (3,2.25) {\scriptsize $W_{k+1}$};

		\draw [->, thick] (3,2) -- (3,1);
		\draw [->, thick] (0,2) -- (0,1);
		\draw [->, thick] (-3,2) -- (-3,1);
\end{tikzpicture}
\end{center}
\caption{An infinite chain of directly coupled linear open quantum systems with nearest neighbour interaction. Also shown are the quantum noises (numbered from left to right).}
\label{fig:chain_direct_coupling}
\end{figure}
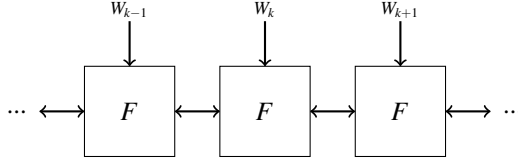
The results of Sections~\ref{sec:Stable_TIN} and \ref{sec:Sep_TIN}, which employ the representation of the translation invariant network in the spatial frequency domain (see Section~\ref{sec:Spatial_fourier_tran}),  can be extended to the infinite network case by letting the fragment length $N$ go to infinity.
The mean square cost functional $\cE_N$, discussed in Section~\ref{sec:Stable_TIN}, leads to a performance index for the infinite network. More precisely, similarly to \cite[Theorem 3]{VP2014}, under the stability condition (\ref{stab}), the infinite time horizon value $\cE_N(\infty)$ in (\ref{equ:cost_fun:inf}) has the following limit as the fragment length $N$ tends to infinity:
\begin{equation}
	\label{equ:cost_infch}
	\lim_{N \to+\infty}
    \cE_N(\infty)
    =
    \frac{1}{2 \pi i}
    \oint_\mU \Tr(\Sigma_z \cS_z)\frac{\rd z}{z}.
\end{equation}
Here, $\Sigma_z$ is the spectral density (\ref{Sigma_z}) of the absolutely summable weighting sequence. The resulting cost functional in (\ref{equ:cost_infch}) corresponds to the thermodynamic limit in equilibrium statistical mechanics \cite{Ruelle78}. The results of Theorem~\ref{thm:entanglement} on separability of states for bipartite subsystems (in the network of one-mode oscillators) can be extended to the infinite chain as follows.

\begin{thm}
	\label{thm:sep}
Suppose the infinite chain of one-mode open quantum harmonic oscillators satisfies the stability condition (\ref{stab}). Then the invariant Gaussian state of the bipartite system at the nodes $j\ne k$ is separable if and only if
	\begin{align}
		\label{equ:cond_sep_inf:1}
		\wh{\Lambda} :=
        \frac{1}{2\pi i}
        \oint_{z \in \mU}
		{\small\begin{bmatrix} \cS_{z} & z^{j-k} \cS_{z} \\
		z^{k-j} \cS_{z} & \overline{\cS_{z}}  \end{bmatrix}} \frac{\rd z}{z}
		\succcurlyeq 0,	
	\end{align}
	where $\cS_z$ is the unique solution of the algebraic Lyapunov equation (\ref{cSz}).
\end{thm}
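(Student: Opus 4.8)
The plan is to derive Theorem~\ref{thm:sep} from Theorem~\ref{thm:entanglement} by letting the fragment length $N$ tend to infinity, in the same spirit as the passage from $\cE_N(\infty)$ to its thermodynamic limit (\ref{equ:cost_infch}). The key observation is that the quantity $\frac{1}{N}\sum_{z\in\mU_N}(\cdot)$ in (\ref{equ:cond_sep:1}) is an equispaced Riemann sum for the contour integral $\frac{1}{2\pi i}\oint_\mU(\cdot)\frac{\rd z}{z}=\frac{1}{2\pi}\int_0^{2\pi}(\cdot)\,\rd\phi$ (with $z=\re^{i\phi}$) in (\ref{equ:cond_sep_inf:1}). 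Indeed, under the stability condition (\ref{stab}) the matrix $\cA_z$ is Hurwitz for every $z\in\mU$, so the algebraic Lyapunov equation (\ref{cSz}) has a unique Hermitian solution $\cS_z$ which depends continuously (in fact real-analytically) on $z$ over the unit circle, being a rational function of the continuously $z$-dependent data $\cA_z$ and $B\Omega B^\rT$. Hence, for fixed nodes $j\ne k$, the map $z\mapsto {\small\begin{bmatrix}\cS_z & z^{j-k}\cS_z\\ z^{k-j}\cS_z & \overline{\cS_z}\end{bmatrix}}$ is continuous on $\mU$, and $\frac{1}{N}\sum_{z\in\mU_N}{\small\begin{bmatrix}\cS_z & z^{j-k}\cS_z\\ z^{k-j}\cS_z & \overline{\cS_z}\end{bmatrix}}\to\wh{\Lambda}$ as $N\to+\infty$.

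First I would identify the bipartite covariance matrix of the infinite chain in its invariant Gaussian state. Arguing as in \cite[Theorem~3]{VP2014}, under (\ref{stab}) the limit $N\to+\infty$ turns the finite-fragment covariance formula (\ref{equ:cros_cov}), evaluated in the invariant state by means of (\ref{Szvinf}), into $\bE(X_j X_k^\rT)=\frac{1}{2\pi i}\oint_\mU z^{j-k}\cS_z\frac{\rd z}{z}$ for all integers $j,k$. By translation invariance of the chain, this yields the bipartite covariance matrix $S(\infty)$ of the vector $\Ups$ in (\ref{equ:xi}) as $\frac{1}{2\pi i}\oint_\mU{\small\begin{bmatrix}\cS_z & z^{j-k}\cS_z\\ z^{k-j}\cS_z & \cS_z\end{bmatrix}}\frac{\rd z}{z}$; equivalently, $S(\infty)$ is the $N\to+\infty$ limit of the finite-ring matrix appearing in the proof of Theorem~\ref{thm:entanglement}. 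Since the joint unitary evolution of the chain and its environment preserves the CCRs, $S(\infty)$ retains the structure in (\ref{S}): real off-diagonal blocks, and diagonal blocks of the form $\Re S_{aa}(\infty)+i\Theta$, $a=1,2$, which coincide and both equal $\frac{1}{2\pi i}\oint_\mU\cS_z\frac{\rd z}{z}$.

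Next I would invoke the Peres-Horodecki-Simon criterion exactly as in Theorem~\ref{thm:entanglement}: since the chain consists of one-mode oscillators initialised in a Gaussian state, the bipartite invariant state at nodes $j\ne k$ is a two-mode Gaussian state, hence, by the Lemma of Appendix~\ref{app:fact_on_ent}, it is separable if and only if the partially transposed matrix $\Lambda(\infty):=S(\infty)+i\,\diag(0,-2\Theta)\succcurlyeq 0$. Partial transposition over the second node affects only the $(2,2)$-block, replacing $S_{22}(\infty)=\Re S_{22}(\infty)+i\Theta$ by $\overline{S_{22}(\infty)}=\frac{1}{2\pi i}\oint_\mU\overline{\cS_z}\frac{\rd z}{z}$ (entry-wise conjugation commutes with the real integral representing $\frac{1}{2\pi i}\oint_\mU(\cdot)\frac{\rd z}{z}$), and leaves the remaining blocks intact. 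Therefore $\Lambda(\infty)=\wh{\Lambda}$ from (\ref{equ:cond_sep_inf:1}), and separability of the bipartite state is equivalent to $\wh{\Lambda}\succcurlyeq 0$, as claimed.

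The part I expect to be the main obstacle is not the algebra but the rigorous justification of the thermodynamic limit underlying the first step, namely that the infinite chain (\ref{dXj}) admits a well-defined invariant Gaussian state whose bipartite covariance is the contour integral above, i.e. that the limits $N\to+\infty$ and $t\to+\infty$ may be interchanged and that the underlying infinite system of QSDEs is well posed. Following the template of \cite[Theorem~3]{VP2014}, which establishes precisely this kind of passage for the mean square cost functional (cf.\ (\ref{equ:cost_infch})), this reduces, once (\ref{stab}) holds, to the elementary Riemann-sum convergence noted above (where continuity and uniform boundedness of $z\mapsto\cS_z$ on $\mU$ are used). Alternatively, one may bypass the state-level limit entirely by defining $S(\infty)$ directly through the spectral-domain formula, verifying that it is a genuine two-mode quantum covariance matrix (a positive semi-definite, entry-wise limit of such matrices carrying the correct CCR imaginary part), and applying the Peres-Horodecki-Simon criterion to the Gaussian state it determines.
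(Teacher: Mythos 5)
Your proposal is correct and follows essentially the same route as the paper: under the stability condition (\ref{stab}) the solution $\cS_z$ of (\ref{cSz}) is analytic, hence continuous, on a neighbourhood of $\mU$, and the finite-fragment matrix $\Lambda(\infty)$ in (\ref{equ:cond_sep:1}) is an equispaced Riemann sum converging to $\wh{\Lambda}$ in (\ref{equ:cond_sep_inf:1}) as $N\to+\infty$, so the Peres-Horodecki-Simon criterion passes to the limit. Your additional elaboration (identifying the infinite-chain bipartite covariance via the thermodynamic limit as in \cite{VP2014} and re-invoking Lemma~\ref{lem:det_neg}) only spells out what the paper leaves implicit, including the limit-interchange issue that the paper likewise does not treat beyond the Riemann-sum argument.
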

\begin{proof}
Under the stability condition (\ref{stab}), the solution $\cS_z$ of (\ref{cSz}) is analytic (and hence, continuous) with respect to $z$  over a neighbourhood of the unit circle $\mU$. Hence, the matrix $\Lambda_N(\infty):= \Lambda(\infty)$ in (\ref{equ:cond_sep:1}), which is the Riemann sum of an appropriate integral, converges to the matrix $\wh{\Lambda}$ in (\ref{equ:cond_sep_inf:1}) as $N \to +\infty$.
\end{proof}

Under the assumptions of Theorem~\ref{thm:sep}, the matrix $\wh{\Lambda}$ in (\ref{equ:cond_sep_inf:1}) is representable as
\begin{equation}
\label{LamS}
    \wh{\Lambda}
    =
    {\small\begin{bmatrix}
        S_0 & S_{j-k}\\
        S_{j-k}^* & \overline{S_0}
    \end{bmatrix}}
\end{equation}
in terms of the Fourier coefficients of the spatial spectral density $\cS_z$:
	$$
		S_{\ell} := \frac{1}{2 \pi i} \oint_{z \in \mU} z^{\ell-1} \cS_z \rd z.
	$$
The latter describe the appropriate cross-covariance matrices for the invariant Gaussian state of the network and, in view of the Riemann-Lebesgue lemma,  satisfy $\lim_{\ell \to \infty}S_{\ell} = 0$. Therefore, if $S_0\succ 0$ (and hence, $\overline{S_0}\succ 0$, see Appendix~\ref{app:fact_about_matrices}), then (\ref{LamS}) implies that $\wh{\Lambda}\succ 0$ for all $|j-k|$ large enough. This means separability (that is, the absence of entanglement) for all nodes $j$ and $k$ which are sufficiently distant from each other.

\section{ILLUSTRATIVE EXAMPLE} \label{sec:example}

We have generated a sample of 100 sets of random coefficients of the QSDE (\ref{dXj}) for a translation invariant network of $N=400$ one-mode ($n=1$) oscillators subject to the stability condition of Lemma~\ref{lem:cov_net_time}. The interaction range in the network was $d=8$, and the dimension of the driving quantum noises was $2m= 2$. For each of these samples,  $\det \Lambda(\infty)$ from (\ref{equ:cond_sep:1}) and the log-negativity index \cite{VW02} were calculated for the bipartite systems at nodes $j\ne k$ of the network. The sample means and ranges of values of these entanglement measures versus $a:=j-k$ are depicted in Fig.~\ref{fig:Entg}. The mean values show that it is quite probable for the nodes of the network to be entangled within the interaction range (that is, for $|a|\< 8$ in this example). According to Fig.~\ref{fig:Entg}, the entanglement vanishes for sufficiently distant nodes (with $|a|> 8$ in this case).
\begin{center}
\begin{figure}[htbp]
	\hspace*{3cm}
    \includegraphics[width=12cm,height=7cm]{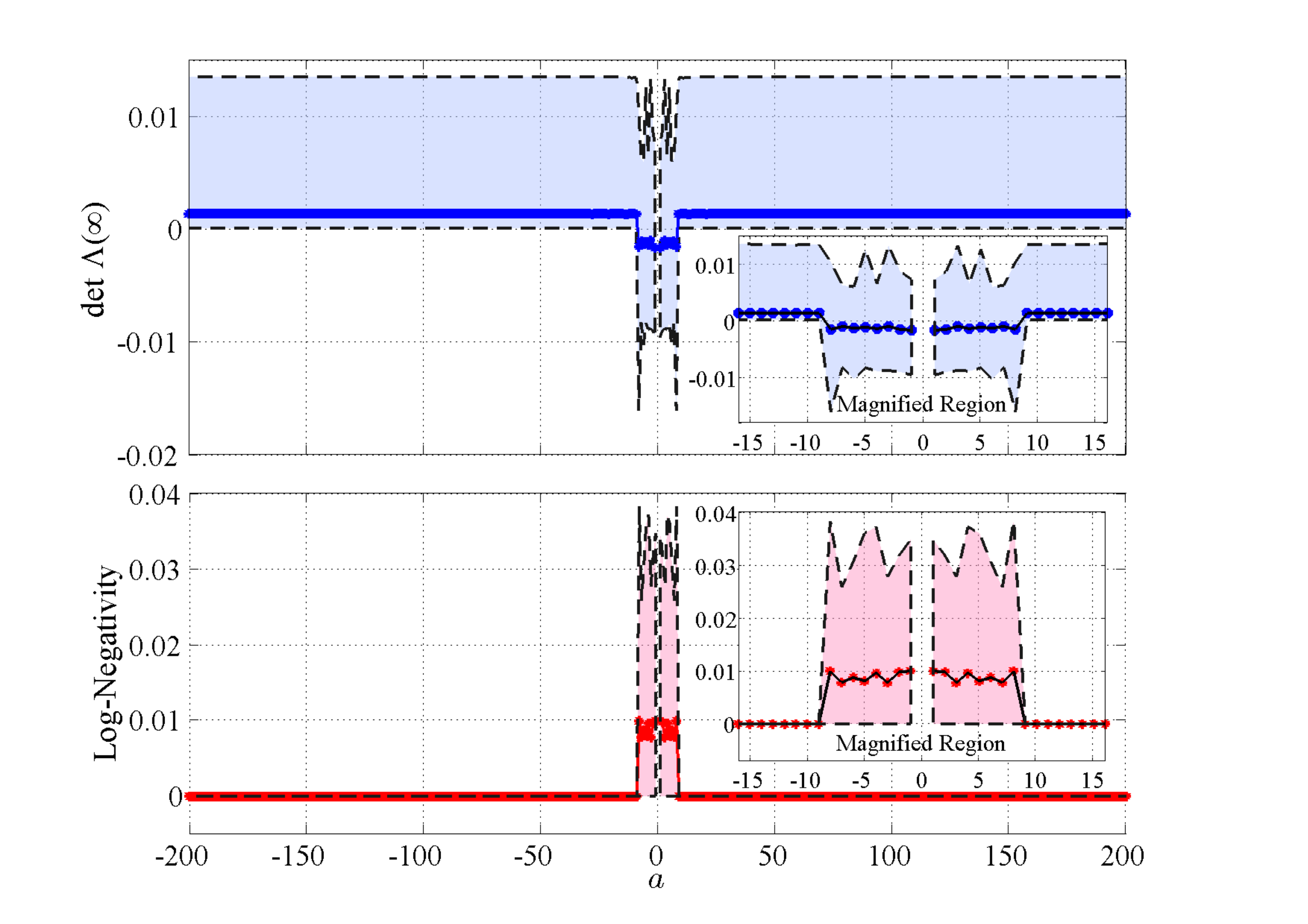}
    \caption{Mean values (solid lines) and ranges of values (colour filled domains) for $\det \Lambda(\infty)$ and the log-negativity versus $a:=j-k$ for nodes $j$ and $k$ in 100 randomly generated stable translation invariant networks of 400 one-mode oscillators. Negative values of $\det \Lambda(\infty)$ and the  corresponding positive values of the log-negativity indicate the quantum entanglement between the nodes.}\label{fig:Entg}
\end{figure}
\end{center}

\section{CONCLUSION}\label{sec:conc}

We have presented the modelling of translation invariant networks of directly coupled linear quantum stochastic systems with finite-range interaction. Using the spatial Fourier transforms and LMIs, we have provided conditions of stability  for the network, and considered a mean square performance functional with block Toeplitz weights for the stable network. For networks of one-mode oscillators in a Gaussian state, we represented the Peres-Horodecki-Simon entanglement criterion (for bipartite systems at arbitrary nodes) in the spatial frequency domain. We have discussed an extension of the results on stability and entanglement to infinite chains of linear quantum systems by letting the network size go to infinity. We have also provided a numerical example which revealed the presence of entanglement between the nodes of the network within the interaction range.

\appendix
\renewcommand{\theequation}{\Alph{subsection}\arabic{equation}}

\subsection{Solution of the Algebraic Lyapunov Equation}\label{app:Lya_sol}
\setcounter{equation}{0}

The solution $\cS_z$ of the algebraic Lyapunov equation (\ref{cSz})
can be computed by using the column-wise vectorization  $\vec(\cdot)$ of matrices \cite{M_1988} as
\begin{equation}
	\label{equ:vecSz}
	\vec (\cS_z)= -\Gamma \vec (B \Omega B^\rT),
    \qquad
    \Gamma:=(\overline{\cA_z} \oplus \cA_z)^{-1},
\end{equation}
where the dependence of $\Gamma$ on $z$ is omitted for brevity, and $\overline{\cA_z} \oplus \cA_z := I_{2n} \ox \cA_z + \overline{\cA_z} \ox I_{2n}$ is the Kronecker sum of $\overline{\cA_z}$ and $\cA_z$. Since $\overline{\cA_z} = \cA_{1/z}$ for any $z\in \mU$ in view of (\ref{equ:A_z}), it follows from (\ref{equ:vecSz}) that $\cS_z$ is a rational function of $z$ which is analytic in a neighbourhood of the unit circle $\mU$, provided the stability condition (\ref{stab}) is satisfied. Indeed, the latter guarantees that the matrix $\overline{\cA_z} \oplus \cA_z$ is also Hurwitz (and hence, nonsingular) for any $z \in \mU$. Similarly to the approach in \cite{BS1970}, the matrix  $\Gamma$ can be computed as
\begin{align}	
	\nonumber
	\Gamma = &  (I_{2n} \ox D^{-1})(I_{2n} \ox \cA_z^{2n-1} - E_1 \ox \cA_z^{2n-2}\\
	\label{equ:Gamma_n}	
	&+E_2 \ox \cA_z^{2n-3}-  \ldots -(-1)^{2n}E_{2n-1} \ox I_{2n}),	\\		
	\label{equ:D_n}
		 D:= & \cA_z^{2n}-c_1\cA_z^{2n-1}+c_2\cA_z^{2n-2} -\ldots +c_{2n}I_{2n},\\
	\label{equ:E1_n}		
		 E_1:= & \overline{\cA_z}+c_1I_{2n},\\
	\label{equ:Ek_n}
		 E_k:=&\overline{\cA_z} E_{k-1} + c_k I_{2n}, \quad k=2, \ldots, 2n-1
\end{align}
where $c_j$ are the coefficients of the characteristic polynomial of $\overline{\cA_z}$, that is
\begin{equation*}
	\det(\lambda I_{2n}-\overline{\cA_z}) = \lambda^{2n}+c_1\lambda^{2n-1}+\ldots+c_{2n}.
\end{equation*}
In the case when the nodes of the translation invariant network are one-mode quantum harmonic oscillators (that is, $n=1$), the representation (\ref{equ:Gamma_n})--(\ref{equ:Ek_n}) reduces to
\begin{align}
	\nonumber
	\Gamma &= (I_{2} \ox D^{-1})(I_{2} \ox \cA_z - E_1 \ox I_2),	\\	\label{equ:Gamma_1}	
	  	   &= I_{2} \ox (D^{-1} \cA_z) - E_1 \ox (D^{-1}),	\\ \label{DD}
		 D& = \cA_z^2+ \cA_z \Tr \overline{\cA_z} +I_2\det \overline{\cA_z},\\ \nonumber		
		 E_1& = \overline{\cA_z}-I_{2} \Tr \overline{\cA_z}.
\end{align}
Furthermore, in this case, the Caley-Hamilton theorem leads to
$
	\cA_z^2=\cA_z\Tr \cA_z - I_2 \det \cA_z
$,
and hence, the matrix $D$ in (\ref{DD}) takes the form
\begin{equation*}
	D=\Tr(\cA_z+\cA_z^*)\cA_z+(\det\overline{\cA_z}-\det \cA_z)I_2.
\end{equation*}
In order to obtain a closed-form expression for $\cS_z$, the matrix $D^{-1}$ can be computed as
\begin{equation*}
	D^{-1}=\frac{1}{\det D}(I_2 \Tr D-D).
\end{equation*}
Here,
\begin{align*}
	\det D&= \Tr(\cA_z+\cA_z^*)^2 \det( \lambda_z I_2 - \cA_z )\\
			&= \Tr(\cA_z+\cA_z^*)^2 (\lambda_z^2 - \Tr(\cA_z)\lambda_z + \det(\cA_z)),\\
			&= 4(\Re v_z)^2(\lambda_z^2 - v_z\lambda_z + u_z ),
\end{align*}
where $v_z:=\Tr \cA_z$, $u_z:=\det\cA_z$, and $\lambda_z:= \frac{\Im u_z}{\Re v_z}$. Therefore,
\begin{align*}
	D^{-1}&= \frac{\Re v_z((v_z-2\Im u_z)I_2-\cA_z)}
	{(\Im u_z)^2-2v_z\Re v_z\Im u_z +2(\Re v_z)^2u_z}.
\end{align*}
This implies that
\begin{equation*}
	\label{equ:S_z_temp}
	\cS_z = -D^{-1}(\cA_zB\Omega B^\rT-B\Omega B^\rT E_1^\rT),	
\end{equation*}
where use is made of (\ref{equ:vecSz}), (\ref{equ:Gamma_1}) and matricization techniques.

\subsection{An Equivalent Bipartite Entanglement Criterion for Two-mode Quantum Harmonic Oscillators in Gaussian States}
\label{app:fact_on_ent}
\setcounter{equation}{0}

\begin{lem}
	\label{lem:det_neg}
	Suppose a two-mode quantum harmonic oscillator with the CCR matrix $I_2\ox \Theta$ is in a Gaussian state with the quantum covariance $(4\x 4)$-matrix $S$ in (\ref{S}). Then this state is entangled if and only if the matrix $\Lambda$ in (\ref{equ:Lambda_chi_relation}) satisfies
	\begin{equation}
		\label{equ:det_negativity}
		\det \Lambda <0.
	\end{equation}
\end{lem}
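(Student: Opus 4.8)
The plan is to reduce the statement to the Peres--Horodecki--Simon criterion already recalled in \eqref{equ:Lambda_chi_relation}, and then show that the matrix inequality $\Lambda \succcurlyeq 0$ fails precisely when $\det \Lambda < 0$. By the Simon criterion, the Gaussian state is separable if and only if $\Lambda \succcurlyeq 0$, so it is entangled if and only if $\Lambda \not\succcurlyeq 0$; hence what must be established is the equivalence
$$
    \Lambda \not\succcurlyeq 0
    \iff
    \det \Lambda < 0.
$$
The implication ``$\det \Lambda < 0 \Rightarrow \Lambda \not\succcurlyeq 0$'' is immediate, since a positive semi-definite Hermitian matrix has a nonnegative determinant. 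The substance is the converse: I need to rule out the possibility that $\Lambda$ is indefinite while still having $\det \Lambda \> 0$ (which, for a Hermitian $(4\x4)$-matrix, would require an even number---two or four---of negative eigenvalues).

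The key structural fact I would exploit is that $\Lambda$ arises from the genuine quantum covariance matrix $S$ via the partial transposition in \eqref{equ:Lambda_chi_relation}, namely $\Lambda = S + i\,\blockdiag(0, -2\Theta)$, and that $S$ itself is a bona fide covariance matrix of a two-mode oscillator, so $S = \Re S + i\, (I_2\ox\Theta)$ with $\Re S \succcurlyeq 0$ and in fact $S \succcurlyeq 0$ as in \eqref{S}. First I would note that the two diagonal blocks $S_{11}$ and $\overline{S_{22}}$ of $\Lambda$ are the single-mode reduced covariance matrices, which by the Heisenberg uncertainty principle satisfy $S_{jj} + i\Theta \succcurlyeq 0$, hence $S_{jj} \succ 0$ (equivalently $\det S_{jj} \> 1$ after normalisation); so $\Lambda$ cannot have its negativity confined to a diagonal block. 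Then I would apply the Schur complement with respect to the positive-definite block $S_{11}$: the number of negative eigenvalues of $\Lambda$ equals the number of negative eigenvalues of the Schur complement $\overline{S_{22}} - S_{12}^\rT S_{11}^{-1} S_{12}$, which is a Hermitian $(2\x2)$-matrix, call it $C$. Now $\det \Lambda = \det S_{11}\,\det C$ with $\det S_{11} > 0$, so $\det \Lambda < 0 \iff \det C < 0 \iff C$ has exactly one negative eigenvalue $\iff \Lambda$ has exactly one negative eigenvalue. The remaining case to exclude is $C \prec 0$ (both eigenvalues negative), which would give $\det \Lambda > 0$ yet $\Lambda$ indefinite.

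To close that gap I would invoke the symplectic/physical-realizability structure: the Simon criterion is sharper than a bare eigenvalue count because $\Lambda$, being a partial transpose of $S$, has symplectic eigenvalues that come in the pair $\{\nu_-, \nu_+\}$ with $\nu_+ \> 1$ always (the larger symplectic eigenvalue of a partial transpose is never below $1$), and entanglement is exactly $\nu_- < 1$. Translating ``$\nu_+ \> 1$'' back to the matrix $\Lambda$, one finds $\Lambda$ can have at most one negative eigenvalue, which forbids the $C \prec 0$ scenario; concretely, I would show that if both eigenvalues of $C$ were negative then the corresponding two-dimensional negative subspace of $\Lambda$, intersected with the constraint coming from $S\succcurlyeq 0$ (since $\Lambda$ and $S$ differ only in the second mode's imaginary part), forces a violation of the single-mode uncertainty relation for mode $k$. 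The main obstacle is precisely this last step --- making the symplectic-eigenvalue pairing rigorous at the level of the $(4\x4)$ Hermitian matrix $\Lambda$ rather than quoting it; everything else (Schur complement, determinant factorization, single-mode uncertainty) is routine linear algebra, and the factorization $\det\Lambda = \det S_{11}\det C = \det S_{22}\det C'$ is exactly the identity already displayed before the theorem.
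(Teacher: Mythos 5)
Your reduction to the Simon criterion and the easy direction ($\det\Lambda<0\Rightarrow\Lambda\not\succcurlyeq 0$) are fine, but the proposal has a genuine gap exactly where the work lies: excluding the scenario in which $\Lambda$ is indefinite while $\det\Lambda\>0$, i.e.\ the case $C:=\overline{S_{22}}-S_{12}^{\rT}S_{11}^{+}S_{12}\prec 0$. You defer this to the claim that ``the larger symplectic eigenvalue of a partial transpose is never below $1$'' and admit you do not make it rigorous; that claim is essentially as deep as the lemma itself, so invoking it (or promising to derive it) does not close the argument. There is also a secondary flaw: the block $S_{11}=\Re S_{11}+i\Theta$ of $\Lambda$ is the \emph{quantum} covariance of the reduced state and need not be positive definite --- for a pure reduced mode (e.g.\ the vacuum, $\Re S_{11}=I_2$) it is singular; the uncertainty relation only gives $\Re S_{11}\succ 0$. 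Hence $S_{11}^{-1}$ may not exist, and your Schur-complement/inertia step has to be carried out with the Moore--Penrose inverse together with the range condition $(I-S_{11}S_{11}^{+})S_{12}=0$, which is exactly what the paper establishes in Appendix~\ref{app:fact_about_matrices} via the square root and polar decomposition of $S$.

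The missing case can be closed elementarily, and this is what the paper's proof does: every principal submatrix of $\Lambda$ of order at most $3$ coincides with a principal submatrix of $S$ or of $\overline{S}$ (the three matrices differ only in the imaginary off-diagonal entries of the diagonal $2\x 2$ blocks, and a submatrix of order $\<3$ omits an index from one of the two modes), so all principal minors of $\Lambda$ up to order $3$ are nonnegative; since a Hermitian matrix is positive semi-definite if and only if \emph{all} its principal minors are nonnegative, $\Lambda\not\succcurlyeq 0$ forces $\det\Lambda<0$. Within your own framework the same conclusion follows at once by noting that the diagonal entries of $C$ coincide with those of the generalized Schur complement $S_{22}-S_{12}^{\rT}S_{11}^{+}S_{12}\succcurlyeq 0$ of $S$ (because $\Theta$ has zero diagonal and $S_{12}$ is real), so $C$ cannot be negative definite --- no symplectic-eigenvalue machinery is needed.
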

\begin{proof}
Positive semi-definiteness of a Hermitian matrix is equivalent to nonnegativeness of all its principal minors \cite[p. 421]{bernstein2009}. 	
In view of (\ref{S}) and (\ref{equ:Lambda_chi_relation}), all the principal minors of $\Lambda$ up to order $3$ inherit nonnegativeness from those of the quantum covariance matrix $S\succcurlyeq 0$ (see part (iii) of Appendix~\ref{app:fact_about_matrices} for details). Therefore, violation of the property $\Lambda \succcurlyeq 0$ is equivalent to  (\ref{equ:det_negativity}), thus providing an equivalent form of the Peres-Horodecki-Simon criterion \cite{Sim2000} of entanglement.
\end{proof}

\subsection{Several Auxiliary Facts on Hermitian Matrices}
\label{app:fact_about_matrices}
\setcounter{equation}{0}

For ease of reference, we mention the following properties of Hermitian matrices
$S=\Re S + i\Im S$:
\begin{enumerate}
	\item[i)]
	$S\succcurlyeq 0$ if and only if
	\begin{equation}
		\label{equ:cmplx_mat_sgn}
			{\small \begin{bmatrix}
				\Re S & -\Im S \\
				\Im S & \Re S
			\end{bmatrix}} \succcurlyeq 0.
	\end{equation}
	\item[ii)]
	$S\succcurlyeq 0$ if and only if $\overline{S}\succcurlyeq 0$.
	\item[iii)]
	Suppose $S\succcurlyeq 0$ is a $(4 \x 4)$-matrix with
	$
	\Im S := I_2 \ox \Theta    	
	$, and $\Theta \in \mA_2$.
	Then all the principal minors of the matrix
	\begin{equation}
    \label{Lambda}
		\Lambda := \Re S + i {\small\begin{bmatrix}
						\Theta & 0\\ 0 & -\Theta
			   	\end{bmatrix}}
			   	= {\small\begin{bmatrix}
					\Lambda_{11} & \Lambda_{12} \\ \Lambda_{12}^\rT & \Lambda_{22}
				  \end{bmatrix}}		
	\end{equation}
	are nonnegative, except probably $\det \Lambda$. The latter determinant can be computed in terms of the $(2\x 2)$-blocks of $\Lambda$ as
			  \begin{align}
            \nonumber
					\det \Lambda & = \det \Lambda_{11} \det ( \Lambda_{22} - \Lambda_{12}^\rT
					\Lambda_{11}^{+} \Lambda_{12})\\
			 \label{equ:det}
					& = \det \Lambda_{22} \det ( \Lambda_{11} - \Lambda_{12}
					\Lambda_{22}^{+} \Lambda_{12}^\rT).
			  \end{align}
\end{enumerate}
\begin{proof}
(i) is verified by inspection.
In view of (i), (ii) can be proved by left and right multiplying both sides of (\ref{equ:cmplx_mat_sgn}) by  ${\scriptsize\begin{bmatrix}
	I & 0 \\ 0 & -I
	\end{bmatrix}}$. To prove (iii), note that the matrix $\Lambda$ in (\ref{Lambda}) shares with $S$ or $\overline{S}$ all principal submatrices up to order $3$. These submatrices inherit positive semi-definiteness from $S$ and $\overline{S}$, and hence, the corresponding minors of $\Lambda$ are nonnegative. Therefore, only the fourth order minor $\det \Lambda$ can be negative. Now, elementary matrix transformations lead to
			  \begin{align}
			  		\nonumber
			  		{\small\begin{bmatrix}
			  			I                               & 0\\
                        -\Lambda_{12}^\rT\Lambda_{11}^+ & I
			  		\end{bmatrix}}	&
					{\small\begin{bmatrix}
			  			\Lambda_{11}            & \Lambda_{12}\\
                        \Lambda_{12}^\rT        & \Lambda_{22}
			  		\end{bmatrix}}			
			  		{\small\begin{bmatrix}
			  			I & -\Lambda_{11}^+\Lambda_{12}\\
                        0 & I
			  		\end{bmatrix}}	\\ \nonumber
			  		&=
			  		{\small\begin{bmatrix}
			  			\Lambda_{11}                            & \Lambda_{12}-\Lambda_{11} \Lambda_{11}^+ \Lambda_{12} \\
			  			 \Lambda_{12}^\rT-\Lambda_{12}^\rT \Lambda_{11}^+\Lambda_{11}  &
                        								 \Lambda_{22}
								 -\Lambda_{12}^\rT
								 \Lambda_{11}^+ \Lambda_{12}		  	
			  		\end{bmatrix}}\\ \label{equ:app:diag}
			  		&=	
			  		{\small\begin{bmatrix}
			  			\Lambda_{11} & 0\\
			  			 0 &
								 \Lambda_{22}
								 -\Lambda_{12}^\rT
								 \Lambda_{11}^+ \Lambda_{12}		 			  	
			  		\end{bmatrix}}.
			 \end{align}			
			 Here, use is made of the properties that $\Lambda_{11}^+ \Lambda_{11} \Lambda_{11}^+ =  \Lambda_{11}^+$ and $\Lambda = \Lambda^*$ together with
			 \begin{align}
			 	\label{equ:temp_AM}
			 	(\Lambda_{11} \Lambda_{11}^+  -I )\Lambda_{12}
			 	= (S_{11} S_{11}^+  -I )
			 	S_{12}=0,
			 \end{align}
			 where the latter follows from $\Lambda_{11} = S_{11}$, $\Lambda_{12} = S_{12}$ and the assumption that $S\succcurlyeq 0$ \cite{zhang2006}. 
Indeed, every $S \succcurlyeq 0$ has a unique square root $\sqrt{S}\succcurlyeq 0$ whose
columns can be partitioned according to the blocks of $S$. More precisely, let $\sqrt{S} := {\begin{bmatrix} L_1 & L_2 \end{bmatrix}}$, and hence,
	\begin{equation}
    \label{SL}
		S =
    {\small\begin{bmatrix}
					S_{11} & S_{12}\\
					S_{21} & S_{22}
				\end{bmatrix}}
    =
    {\small\begin{bmatrix}
					L_1^* L_1 & L_1^* L_2\\
					L_2^* L_1 & L_2^* L_2
				\end{bmatrix}}.
	\end{equation}
	Consider the polar decomposition of $L_1= \Phi \Xi$, where $\Phi$ has orthonormal columns and $\Xi= \Xi^*\succcurlyeq 0$. Then $S_{11}= L_1^* L_1 =\Xi^* \Xi=\Xi^2$ and $S_{12}=  L_1^* L_2 = \Xi \Phi^* L_2$ in view of (\ref{SL}). This implies $S_{11} S_{11}^+S_{12} = \Xi^2 (\Xi^2)^+ \Xi \Phi^* L_2= \Xi \Xi^+ \Xi \Phi^* L_2=S_{12}$, thus establishing (\ref{equ:temp_AM}). Here, use is also made of $\Xi^+= \Xi^*(\Xi\Xi^*)^+$, the fact that $\Xi$ is Hermitian and the definition of the generalized inverse.
Now, by evaluating the determinant of both sides of (\ref{equ:app:diag}), it follows that $\det \Lambda$ is indeed representable by the first of the equalities  (\ref{equ:det}). The second of them is established in a similar fashion.
\end{proof}

\begin{thebibliography}{99}{\scriptsize
\bibitem{BS1970}
  S.Barnett, and C.Storey,
  \emph{Matrix Methods in Stability Theory}, Nelson, 1970.
\bibitem{bernstein2009}
D.S.Bernstein,
\emph{Matrix Mathematics: Theory, Facts, And Formulas},
Princeton University Press, 2009.
\bibitem{EB_2005}
S.C.Edwards, and V.P.Belavkin,
Optimal quantum filtering and
quantum feedback control,
\emph{arXiv:quant-ph/0506018v2}, August 1,  2005.
\bibitem{GZ_2004}
C.W.Gardiner, and P.Zoller,
\textit{Quantum Noise},
Springer, Berlin, 2004.
\bibitem{GKS_1976}
V.Gorini, A.Kossakowski, and E.C.G.Sudarshan, Completely positive dynamical
semigroups of N-level systems, \emph{J. Math. Phys.}, vol. 17, no.
5, 1976, pp. 821--825.
\bibitem{gren58}
  U.Grenander, and G.Szeg\"{o}, \textit{Toeplitz Forms and Their Applications},
  University of California Press, Berkeley, 1958.
\bibitem{hormander73}
  L.H\"{o}rmander,
  \emph{An Introduction to Complex Analysis in Several Variables},
  Elsevier,
  1973.
\bibitem{Ho2009}
R.Horodecki, P.Horodecki, M.Horodecki, and K.Horodecki, Quantum entanglement, \emph{Rev. Mod. Phys.}, vol. 81, no. 2, 2009, pp. 865--942.
\bibitem{HP_1984}
R.L.Hudson,  and K.R.Parthasarathy,
Quantum It\^{o}'s formula and stochastic evolutions,
\emph{Commun. Math. Phys.}, vol.  93, 1984, pp. 301--323.
\bibitem{L_1976}
G.Lindblad, On the generators of quantum dynamical semigroups,
\emph{Commun. Math. Phys.}, vol. 48, 1976, pp. 119--130.
\bibitem{M_1988}
J.R.Magnus,
\emph{Linear Structures},
Oxford University Press, New York, 1988.
\bibitem{Newman99}
  M.E.J.Newman and G.T.Barkema,
  \emph{Monte Carlo Methods in Statistical Physics},
  Clarendon Press Oxford, 1999.
\bibitem{NC_2000}
M.A.Nielsen, and I.L.Chuang,
\textit{Quantum Computation and Quantum Information},
Cambridge University Press, Cambridge, 2000.
\bibitem{Notomi2008}
M.Notomi, E.Kuramochi,  and T.Tanabe,
Large-scale arrays of ultrahigh-Q coupled nanocavities,
\emph{Nature Photonics}, vol. 2, no. 12, 2008, pp. 741--747.
\bibitem{P_1992}
K.R.Parthasarathy,
\emph{An Introduction to Quantum Stochastic Calculus},
Birk\-h\"{a}user, Basel, 1992.
\bibitem{KRP_2010}
K.R.Parthasarathy,
What is a Gaussian state?
\emph{Commun. Stoch. Anal.}, vol. 4, 2010, pp. 143--160.
\bibitem{Peres96}
A.Peres,
Separability criterion for density matrices,
\emph{Phys. Rev. Lett.}, vol. 77, no. 8, 1996, pp. 1413--1415.
\bibitem{Ian2012robust}
I.R.Petersen, V.Ugrinovskii, and M.R.James
Robust stability of uncertain linear quantum systems,
\emph{Phil. Trans. Royal Soc. A}, vol. 370, no. 1979, 2012, pp. 5354--5363.
\bibitem{Quach11}
J.Q.Quach, C.-H.Su, A.M.Martin, A.D.Greentree, and L.C.L.Hollenberg,
  Reconfigurable quantum metamaterials,
  \emph{Optics Express},
  vol. 19,
  no. 12,
  2011,
  pp. 11018--11033.
\bibitem{Rakhmanov2008}
  A.L.Rakhmanov, A.M.Zagoskin, S.Savel'ev, and F.Nori,
  Quantum metamaterials: Electromagnetic waves in a Josephson qubit line,
  \emph{Phys. Rev. B},
  vol.77,
  no. 14,
  2008,
  p. 144507.
 \bibitem{Ruelle78}
  D.Ruelle,
  \emph{Thermodynamic Formalism}, Addison-Wesley, London, 1978.
\bibitem{Serafini05}
  A.Serafini, M.G.A.Paris, F.Illuminati, and S.De.Siena,
  Quantifying decoherence in continuous variable systems,
  \emph{J. Optics B: Quantum and Semiclassical Optics},
  vol. 7,
  no. 4,
  2005,
  p. R19.
\bibitem{SVP_2015}
A.K.Sichani, I.G.Vladimirov, and I.R.Petersen,
Robust mean square stability of open quantum stochastic systems with Hamiltonian perturbations in a Weyl quantization form,
Proc. Australian Control Conference,
Canberra, 17-18 November 2014, pp. 83--88 (arXiv:1503.02122 [quant-ph], 7 March 2015).
\bibitem{Sim2000}
R.Simon,
Peres-Horodecki Separability Criterion for Continuous Variable Systems,
\emph{Phys. Rev. Lett.}, vol. 84, 2000, pp. 2726--2729.
\bibitem{Veselago2006}
V.Veselago, L.Braginsky, V.Shklover, and C.Hafner,
Negative refractive index materials,
\emph{J. Comput. Theor. Nanoscience},
vol. 3, no. 2, 2006, pp. 189--218.
\bibitem{VW02}
G.Vidal, and R.F.Werner,
Computable measure of entanglement,
\emph{Phys. Rev. A},
vol. 65,
no. 3,
2002,
pp. 032314.
\bibitem{VP2014}
I.G.Vladimirov, I.R.Petersen,
Physical realizability and mean square performance of translation invariant networks of interacting linear quantum stochastic systems,
Proc. 21st Int. Symp. on Math. Theory of Networks and Systems, Groningen, Netherlands, June 7--11,  2014,  pp. 1881--1888.
\bibitem{Werner01}
  R.F.Werner, and M.M.Wolf,
  Bound entangled Gaussian states,
  \emph{Phys. Rev. Lett.},
  vol. 86,
  no. 16,
  2001,
  pp. 3658--3661.
\bibitem{Zagoskin11}
  A.M.Zagoskin,
  \textit{Quantum Engineering: Theory and Design of Quantum Coherent Structures}, Cambridge
  University Press, 2011.
\bibitem{Zagoskin12}
A.M.Zagoskin,
Superconducting quantum metamaterials in 3D: possible realizations,
\emph{J. Optics}, vol. 14, 2012, p. 114011.
 \bibitem{zhang2006}
  F.Zhang, \emph{The Schur Complement and Its Applications}, Springer, 2005.
\bibitem{Zheludev11}
N.I.Zheludev,
A roadmap for metamaterials,
\emph{Opt. Photon. News}, vol. 22, 2011, pp. 30--35.
}
\end{thebibliography}
\end{document}